\documentclass[twocolumn]{autart}    

\usepackage{graphicx}          
\usepackage{tabularx,booktabs}

\usepackage{comment}
\usepackage{float}
\usepackage{color}
\usepackage{amsmath,amssymb}
\usepackage{amsfonts}
\usepackage{textcomp}
\usepackage{psfrag}
\usepackage{multimedia}
\usepackage{fancybox}
\usepackage{mathtools}
\newtheorem{theorem}{Theorem}

\newtheorem{Lemma}{Lemma}
\newtheorem{Definition}{Definition}
\newtheorem{remark}{Remark}

\newtheorem{Assumption}{Assumption}
\usepackage{enumerate,enumitem}
\newcommand{\myqed}{\hfill $\blacksquare$}
\usepackage{bm} 
\usepackage{tikz}
\DeclareMathOperator{\F}{\rotatebox[origin=c]{45}{\scalebox{0.9}{$\Box$}}}
\usetikzlibrary{shapes.geometric, arrows.meta, positioning}
\usepackage{standalone}
\usepackage[T1]{fontenc}
\definecolor{subsectioncolor}{rgb}{0,0.541,0.855}

\begin{document}

\begin{frontmatter}

\title{ Almost Sure Reachability\\ in Continuous-time Stochastic Systems}

\author[]{Arash Bahari Kordabad,} 
\author[]{Rupak Majumdar,} 
\author[]{Sadegh Soudjani}

\address{Max Planck Institute for Software Systems, Kaiserslautern, Germany (e-mails: {\{arashbk,rupak,sadegh\}@mpi-sws.org}).}

\begin{keyword}                           
Almost sure reachability,
Stochastic differential equations,
Continuous-time systems,
Reachability certificates,
Sum-of-squares optimization
\end{keyword}                             

\begin{abstract}
We provide certificates for almost sure reachability of continuous-time stochastic systems governed by stochastic differential equations (SDEs). We first show that a standard Euler–Maruyama discretization may fail to preserve almost sure reachability property of the system using a double-well Langevin system. This observation motivates us to develop certificates for almost sure reachability directly on the continuous-time system. We introduce a pair of certificates, a \emph{drift} function and a \emph{variant} function, and prove necessity and sufficiency for almost sure reachability of an open bounded target set. Using these certificates, for linear SDEs, we give a characterization of almost sure reachability in terms of the spectral structure of the system matrices. For polynomial SDEs, we fix a polynomial template for the drift function and choose the variant function template as an exponential function composed with a polynomial. This allows us to translate the conditions in the certificates into sum-of-squares (SOS) constraints. We then propose an alternating scheme to resolve bilinearities. We illustrate the approach on the double-well Langevin example, showing that continuous-time SOS certificates recover almost sure reachability that is lost under time discretization. Moreover, we verify the SOS approach on a polynomial system.
\end{abstract}
\end{frontmatter}

\section{Introduction}
\label{sec:intro}

Stochastic differential equations (SDEs) play a central role in modeling stochastic dynamical phenomena in physics, chemistry, biology, finance, and control engineering~\cite{oksendal2003stochastic}.  In many such systems, a key qualitative question is whether trajectories \emph{almost surely} reach a desired set despite randomness and initial state~\cite{prandini2006stochastic}.  For instance, in metastable systems modeled by Langevin dynamics, the ability of the process to transition between potential wells with probability one is critical for understanding reaction pathways and long-term behavior~\cite{aristoff2023weighted}. 

Recent development in discrete-time systems has provided two certificates---a \emph{drift} and a \emph{variant} function---that are both necessary and sufficient for almost sure reachability and can be verified using only the one-step transition kernel associated with the system dynamics~\cite{majumdar2024necessary}. 
A natural question is whether one can apply these certificates to a discretized version of the systems modeled with continuous-time SDEs, thereby leveraging discrete-time tools such as sum-of-squares (SOS)~\cite{kordabad2025sum}. 
Unfortunately, we show that an Euler–Maruyama discretization may fail to preserve almost sure reachability of the original continuous-time system.
This is also aligned with available results on numerical instability~\cite{ardourel2021numerical}. 
While more stable discretization schemes can mitigate divergence, they typically do not preserve the polynomial structure needed by, e.g., SOS methods. This demonstrates that almost sure reachability is not preserved under naive discretization, and thus continuous-time certificates must be developed directly for SDEs.

Our contribution is to develop such certificates.

\medskip
\textbf{Related works.}
Classical analyses of stochastic dynamical systems rely on supermartingale arguments to characterize stability, recurrence, and non-evanescence of Markov processes~\cite{meyn2012markov}. In parallel, probabilistic certificates such as stochastic barrier functions have been developed for safety and reachability in continuous-time and hybrid systems~\cite{kordabad2024control,santoyo2021barrier}.  These approaches typically give sufficient conditions for safety or bounded reachability probabilities, but they are not necessary and sufficient criteria for almost sure properties.

Many studies on safety and reachability of continuous-time stochastic systems are based on barrier-type certificates. Stochastic control barrier function (CBF) frameworks extend deterministic CBF ideas to diffusion processes and provide sufficient conditions for forward invariance with probability one~\cite{clark2021control,so2023almost}. Subsequent works refine these conditions and their use in safety-critical control design, for example by proposing alternative stochastic zeroing-type certificates that quantify the probability of staying in a safe set~\cite{nishimura2024control}. Related formulations introduce stochastic control barrier functions that aim to reduce conservatism and control effort compared to reciprocal or zeroing variants, and extend the theory to high-order constraints~\cite{wang2021safety}. In parallel, learning-based representations have been explored through stochastic neural CBFs~\cite{zhang2025stochastic}. Beyond safety filtering, stochastic CBFs have also been integrated into safe optimal control and trajectory optimization pipelines, including approaches based on forward--backward SDEs and deep learning~\cite{pereira2021safe}.

While barrier-type certificates are often presented as sufficient conditions, a parallel line of work studies \emph{converse} statements showing that certificates also exist whenever the system is safe, stable or reachable. In deterministic systems, necessity results for barrier certificates have been established under fairly general settings using convex duality and density-function arguments, showing that safety implies the existence of a barrier certificate ~\cite{prajna2005necessity}. Related converse theorems prove that every robustly safe differential equation admits a barrier certificate and provide explicit constructions based on finite-horizon reachable sets~\cite{ratschan2018converse}. In the same spirit, converse problems have been investigated for specific barrier subclasses used in control yielding necessary and sufficient conditions for the existence of such certificates~\cite{lyu2025converse}.  In stochastic settings, converse drift condition results are classical for Markov processes: for discrete-time Markov chains, \cite[Ch.~9]{meyn2012markov} develops conditions under which reachability implies the existence of a Lyapunov-type drift function.

A related line of work has studied termination and reachability in probabilistic programs using such certificates.  For probabilistic programs, stochastic invariants and ranking supermartingales have been extensively used to reason about both qualitative and quantitative termination~\cite{chatterjee2017stochastic,chatterjee2022sound}. These approaches typically rely on \emph{expected decrease} conditions to ensure progress.  Supermartingale-based certificates have also been extended to more general specifications, including $\omega$-regular objectives and quantitative verification~\cite{abate2024stochastic}, where hierarchical or compositional supermartingale constructions are employed. Similarly, $k$-inductive barrier certificates have been adapted to stochastic systems to provide probabilistic safety and reachability guarantees over unbounded horizons~\cite{anand2022k}. Sound and complete proof systems for probabilistic termination---based on ranking supermartingales and progress conditions---have been developed in both qualitative and quantitative forms~\cite{majumdar2025sound,chatterjee2022sound}.  These works highlight the importance of combining non-divergence guarantees with reachability, a perspective closely echoed in drift–variant certificate frameworks. Subsequent work used this theory for linear systems with additive noise~\cite{kordabad2025certificates} to characterize almost sure reachability in terms of the system matrices and to derive explicit drift–variant certificate forms. For polynomial discrete-time systems, SOS-based methods were developed to synthesize polynomial drift–variant certificates~\cite{kordabad2025sum}. These advances highlight the scope of the drift–variant framework; however, they all apply to discrete-time systems.

\bigskip
\textbf{Contributions.}
This paper develops a continuous-time theory of almost sure reachability together with the required computational tools.  
Our main contributions are:

i) We introduce drift and variant certificates for SDEs and prove that, under mild assumptions on the system, they are both sufficient and necessary for almost sure reachability of a bounded open target set.

ii)  For linear systems with additive noise, we derive a characterization of almost sure reachability based on the system matrices. This extends classical recurrence/transience results to general linear SDEs.

iii) For polynomial systems, we express the inequalities of the certificates as SOS constraints, yielding semidefinite programs that synthesize polynomial drift and variant functions efficiently. An alternating scheme is proposed to resolve bilinear terms in the variant conditions. 

\bigskip
\textbf{Outline.}
Section~\ref{sec:motivational-example} presents a double-well Langevin dynamics example and illustrates how discretization breaks reachability.  Section~\ref{sec:per} provides the preliminaries and background on stochastic differential equations and continuous-time Markov processes. Section~\ref{sec:cts-certificates} introduces continuous-time drift and variant certificates and establishes sufficiency and necessity. In Section~\ref{sec:exam:cont}, we revisit the double-well example and provide the proposed almost sure reachability certificates in continuous-time. Section~\ref{sec:linear-sde} provides the reachability characterization for linear SDEs.  
Section~\ref{sec:sos} develops SOS-based synthesis for polynomial SDEs.  
Section~\ref{sec:simulations} presents numerical examples, and Section~\ref{sec:conclusion} concludes the paper.

\bigskip
\textbf{Notation.}
We write $\mathbb N:=\{0,1,2,\ldots\}$, $\mathbb Q$ for the rational numbers, and $\mathbb R$ for the real numbers.  For a set $S$, we denote by $\overline S$ its closure, by $S^{\mathrm c}$ its complement, and by $\partial S$ its boundary. A set $S \subseteq \mathbb{R}^n$ is called {open} if for every point  $x \in S$, there exists a neighborhood of  $x$ entirely contained in $S$. A set is {closed} if its complement is open. A set $S \subset \mathbb{R}^n$ is {bounded} if there exists an $M\in \mathbb{R}_{\geq 0}$ such that $\|x\| \leq M$ for all $x \in S$. A set is said to be {compact} if it is closed and bounded. $S$ is precompact if $\overline S$ is compact. We denote by $\mathcal B(X)$ the Borel $\sigma$-algebra on $X$, namely the $\sigma$-algebra generated by the open subsets of $X$, and by $\mathcal B(X)$ the set of Borel measurable functions from $X$ to $\mathbb R$. A function $f:X\to\mathbb R$ is Borel measurable if $f^{-1}(B)\in\mathcal B(X)$ for every Borel set $B\subseteq\mathbb R$, and we write $C_b(X)$ for the set of bounded continuous functions on $X$. A function $f:X\to\mathbb R\cup\{\pm\infty\}$ is lower semicontinuous if $\{x\in X:f(x)>a\}$ is open for every $a\in\mathbb R$, and upper semicontinuous if $\{x\in X:f(x)<a\}$ is open for every $a\in\mathbb R$. For a matrix $M$, $\sigma(M)$ denotes its spectrum, $\operatorname{Re}(\lambda)$ the real part of $\lambda\in\mathbb C$, $\alpha(M):=\max\{\operatorname{Re}(\lambda):\lambda\in\sigma(M)\}$ its spectral abscissa. Finally, $\mathbb R[x]$ denotes the ring of real multivariate polynomials in $x$, and $\Sigma[x]$ denotes the cone of sum-of-squares polynomials.

\section{Example: Discretization Breaks Reachability}\label{sec:motivational-example}
In this section, we provide a simple yet revealing example that highlights a fundamental challenge in certifying almost sure reachability for continuous-time stochastic systems using discrete-time techniques.  The overdamped Langevin dynamics in a double-well potential provide a classical model of metastable behavior~\cite{bovier2016metastability}: trajectories fluctuate between two energy wells due to thermal noise, and the process reaches either well with probability one.   We will show that the common forward Euler–Maruyama discretization can introduce artificial numerical instabilities that are entirely absent in the continuous-time SDE.   These instabilities not only break almost sure reachability but also make it impossible for any reachability certificate to hold in the discretized model.

Consider the overdamped Langevin stochastic differential equation associated with the double-well potential $\mathcal{U}(x)=(x-1)^2(x+1)^2$:
\begin{align}
\mathrm{d}x(t)&=-\mathcal{U}'(x(t))\mathrm{d}t+\sigma\mathrm{d}W(t)\nonumber\\ &
=(-4x(t)^3+4x(t))\mathrm{d}t+\sigma\mathrm{d}W(t),\label{eq:cont:Lag}
\end{align}
where $x(t)\in \mathbb{R}$ is the state of the system at time $t\ge 0$, $W(t)$ denotes standard Brownian motion, and $\sigma^2=\tfrac{2}{5}$.
The continuous-time model is polynomial and almost surely reachable between the two wells.
In particular, starting from $x_0=-1$, or any other initial state, the process reaches a small target interval around $+1$ with probability one~\cite{bovier2016metastability}.

A natural question is whether this almost sure reachability property is preserved under standard time discretizations. A forward Euler–Maruyama discretization with step size $\Delta t>0$ yields the following polynomial discrete-time system
\begin{equation*}
x_{n+1}=f_{\text{d}}(x_n,w_n):=x_n+4x_n(1-x_n^2)\Delta t+\sqrt{\tfrac{2\Delta t}{5}}w_n,
\end{equation*}
where $x_n$ is the state at time $n=\Delta t$ and $f_{\text{d}}$ is the system dynamics in discrete-time. Random variables  $w_n\stackrel{\text{i.i.d.}}{\sim}\mathcal{N}(0,1)$ are drawn from a standard Gaussian distribution.
While this discretization preserves the polynomial structure of~\eqref{eq:cont:Lag}, it introduces a numerical instability absent in the continuous-time dynamics.

To understand the source of this discrepancy, consider the deterministic part of the discretized dynamics,
\begin{equation*}
g_{\text{d}}(x)=x+4x(1-x^2)\Delta t=x\bigl(1+4\Delta t(1-x^2)\bigr).
\end{equation*}
For $|x|\geq \sqrt{1 + 3/(4\Delta t)}$, we have
\begin{equation*}
1-x^2\leq \frac{-3}{4\Delta t}\Rightarrow1+4\Delta t(1-x^2)\leq -2,
\end{equation*}
and hence $|1+4\Delta t(1-x^2)|\geq 2$.
Therefore, $|g_{\text{d}}(x)|\geq 2|x|$, implying exponential divergence of the deterministic component for any initial state satisfying  $|x|\geq \sqrt{1 + 3/(4\Delta t)}$.

Thus, Euler-type discretizations may violate reachability, even when the original continuous-time system reaches the target almost surely. More stable schemes, such as implicit or tamed Euler and split-step methods, may preserve reachability~\cite{zhang2009split,brehier2023approximation}. However, they typically do not preserve the polynomial structure of the dynamics. Consequently, polynomial SOS-based synthesis methods are no longer applicable.

This limitation motivates the development of reachability certificates formulated directly for continuous-time stochastic systems.

\section{Preliminaries and Background}\label{sec:per}
In this paper, we consider the state space $X\subseteq\mathbb{R}^n$ to be locally compact metric space. Now, consider the following stochastic differential equation (SDE), 
\begin{equation}\label{eq:sde}
\mathrm{d}x(t)=f(x(t))\mathrm{d}t+g(x(t))\,\mathrm{d}W(t),
\end{equation}
with initial condition $x(0)=x_0\in X$, where $x(t)\in X$ is the state at time $t\geq 0$, and $W(t)\in\mathbb{R}^m$ is an $m$-dimensional standard Brownian motion. Moreover, $f:X\to\mathbb{R}^n$ and $g:X\to\mathbb{R}^{n\times m}$ are given functions. 

The SDE is defined on a complete filtered probability space $(\Omega,\mathcal{F},{\mathcal{F}_t},\mathbb{P})$ satisfying the usual conditions (right-continuity and completeness), where $\mathcal{F}$ is the $\sigma$-algebra of measurable events, and $\mathcal{F}_t$ is the natural filtration generated by the Brownian motion $W(t)$ and the initial state $x(0)$:
\begin{equation*}
\mathcal{F}_t=\sigma\big(x(0),W(s):0\le s\le t\big).
\end{equation*}
All expectations and conditional expectations $\mathbb{E}[\cdot\,|\,\mathcal{F}_s]$ are taken with respect to this filtration.
\smallskip

We define the transition semigroup $P_t: \mathcal{B}(X)\rightarrow \mathcal{B}(X)$, for any $t\geq 0$, associated with the SDE~\eqref{eq:sde} by
\begin{equation}
\label{eq:SemiG}
P_t B(x):=\mathbb{E}[B(x(t))\,|\,x(0)=x],\qquad B : X \to \mathbb{R}.
\end{equation}

\begin{Assumption}[Path Regularity]\label{assum:reg}
Functions $f$ and $g$ in~\eqref{eq:sde} are Borel measurable and locally bounded.
Moreover, for~\eqref{eq:sde}, a (weak or strong) solution exists.
\end{Assumption}

Note that right-continuity of the sample paths holds under
Assumption~\ref{assum:reg}.

\begin{Assumption}[Weak Feller Property]\label{assum:WF}
The transition semigroup $(P_t)_{t\ge0}$ defined in~\eqref{eq:SemiG} maps bounded continuous functions to bounded continuous functions for all $t\geq 0$.
\end{Assumption}
A similar notion of a weak Feller semigroup is adopted e.g., in~\cite{xiang2009measure}. 
For example, the Lipschitz continuity of $f$ and $g$ in~\eqref{eq:sde} ensures the  weak Feller property of the associated transition semigroup (see e.g.,~\cite{kwon2025exit}).

\begin{Definition}[Infinitesimal Generator]\label{def:inf_G}
Let $x(t)$ be a Markov process defined in~\eqref{eq:sde}. The \emph{infinitesimal generator} $\mathcal{A}$ of $x(t)$ is defined by
\begin{equation*}
\mathcal{A}B(x):=\lim_{t\downarrow0}\frac{P_t B(x)-B(x)}{t},
\end{equation*}
where $B: X\rightarrow \mathbb{R}$ is in a set $\mathrm{dom}(\mathcal{A})$ (the domain of the operator $\mathcal{A}$) of functions such that the limit exists at the given $x\in X$.
\end{Definition}

For the SDE~\eqref{eq:sde}, the generator takes the explicit form
\begin{equation}\label{eq:inf:exp}
\mathcal{A}B(x)=f(x)^\top\nabla B(x)+\tfrac{1}{2}\operatorname{tr}\big(g(x)^\top\nabla^2 B(x)g(x)\big),
\end{equation}
whenever $B$ is a twice continuously differentiable function.

The classical infinitesimal generator $\mathcal A$ is defined through the
pointwise limit. For the SDE~\eqref{eq:sde}, the existence of this limit typically requires a regularity assumption on both SDE and $B$, such as twice continuous
differentiability of $B$.
However, many functions of interest in reachability and stability analysis,
such as hitting probabilities constructed
via stopping times, are generally only measurable or continuous and do not
belong to $\mathrm{dom}(\mathcal A)$.
To accommodate such functions while retaining a pointwise notion of drift, as in e.g.,~\cite{meyn1993stability}, we use an \emph{extended} (or \emph{Dynkin}) generator,
defined via martingale properties rather than pointwise semigroup
derivatives.

\begin{Definition}[Extended Infinitesimal Generator]\label{def:extgen_general}
Let $x(t)$ be a Markov process defined in~\eqref{eq:sde}.
A Borel function $B:X\to\mathbb R$ is said to belong to the domain $\mathrm{dom}(\mathcal {A}^{\mathrm e})$ of the
\emph{extended infinitesimal generator} $\mathcal A^{\mathrm e}$ if there exists a Borel function $\psi:X\to\mathbb R$
such that for each initial state $x\in X$ and $t> 0$, $\int_0^t |\psi(x(s))|\,ds < \infty$ almost surely, and the process
\begin{equation*}\label{eq:extgen_general_mg}
M_t^{B} \;:=\; B(x(t)) - B(x) - \int_0^t \psi(x(s))\,\mathrm{d}s,
\end{equation*}
is a right-continuous martingale.
In this case we write $\mathcal A^{\mathrm e} B = \psi$.
\end{Definition}

Note that if $B\in\mathrm{dom}(\mathcal A)$ in the sense of Definition~\ref{def:inf_G},
then by Dynkin's formula $M_t^{B}$ is martingale in Definition~\ref{def:extgen_general}, and $B\in\mathrm{dom}(\mathcal A^{\mathrm e})$ with
$\mathcal A^{\mathrm e}B=\mathcal AB$.

\begin{Definition}[First Hitting Time]\label{def:Hitt}
For a measurable set $A\subseteq X$, the \emph{first hitting time} is defined as
\begin{equation*}
\sigma_A:=\inf\{t\ge0:x(t)\in A\}.
\end{equation*}
\end{Definition}

Throughout, fix an increasing family of open \emph{precompact} sets
$\{O_m:m\in\mathbb Z_+\}$ such that $O_m \uparrow X$ as $m\to\infty$. Such sets exist by Lindel\"of's theorem~\cite{meyn2012markov}.

For each $m\in\mathbb Z_+$, we define the first exit time
from $O_m$, denoted by $T^m$, as the first hitting time of its complement, i.e., $T^m := \sigma_{O_m^c}$,
with the convention that $\inf\emptyset=\infty$.

In order to localize the reachability certificates to precompact sets and accommodate non-smooth test functions, it is standard to introduce truncations of the process, see, e.g.,~\cite{meyn1993stability}. This truncated notation is used primarily in the necessity part of our results in the next section, where we explicitly construct a drift certificate and require a localized pointwise drift characterization.

We next define the truncated
process $x^m(t)$ by
\begin{equation}\label{eq:trunc_process}
x^m(t) \;:=\;
\begin{cases}
x(t), & t < T^m,\\
\Delta_m, & t \ge T^m,
\end{cases}
\end{equation}
for any fixed $\Delta_m\in O_m^c\subset X$. Equivalently, $x^m$ coincides with $x$ on $[0,T^m)$ and is absorbed at
$\Delta_m$ after leaving $O_m$.

Analogous to ~\eqref{eq:SemiG}, let $(P_t^m)_{t\ge 0}$ denote the transition semigroup associated with
$x^m$, i.e., for any $B: X\to\mathbb R$, 
\begin{equation*}
P_t^m B(x)
\;:=\;
\mathbb E\left[B(x^m(t))\,\middle|\,x^m(0)=x\right].
\end{equation*}

\begin{Definition}\label{def:trunc_ext_gen}

{\normalfont\bfseries(Truncated Extended Infinitesimal Generator)}
For a fixed $m\in\mathbb Z_+$, a Borel function $B:X\to\mathbb R$
is said to belong to the domain $\mathrm{dom}(\mathcal A_m^{\mathrm e})$
of the \emph{truncated extended infinitesimal generator}
$\mathcal A_m^{\mathrm e}$ if there exists a Borel function
$\psi:X\to\mathbb R$ such that for each initial state $x\in X$ and
$t\ge 0$, $
\int_0^t |\psi(x^m(s))|\,ds < \infty,
$ and the process
\begin{equation*}
M_t^{B,m}
\;:=\;
B(x^m(t)) - B(x^m(0)) - \int_0^t \psi(x^m(s))\,\mathrm{d}s,
\end{equation*}
is a right-continuous martingale.
In this case we write $\mathcal A_m^{\mathrm e} B = \psi$.
\end{Definition}

The truncation preserves the dynamics on $O_m$ up to $T^m$.
In particular, if $B\in \mathrm{dom}(\mathcal A^{\mathrm e})$,
then  $ B\in \mathrm{dom}(\mathcal A_m^{\mathrm e})$ and $\mathcal A_m^{\mathrm e} B(x)
=
\mathcal A^{\mathrm e} B (x)$ for all $x\in O_m$~\cite{meyn1993stability}.

Consequently, any statements established for $\mathcal A_m^{\mathrm e}$
on $O_m$ can be viewed as localized versions of the corresponding statements
for $\mathcal A^{\mathrm e}$, and global conclusions for the original process
are obtained by letting $m\to\infty$ and the monotonicity of
$T^m$.

\section{Continuous-Time Reachability Certificates}\label{sec:cts-certificates}
In this section, we develop a continuous-time drift--variant certificate framework for almost sure reachability. Almost sure reachability of $G$, in the sense of Definition~\ref{def:Hitt}, means $\mathbb{P}_{x_0}(\sigma_G<\infty)=1$ for all initial states $x_0\in X$.
Our goal is to characterize almost sure reachability for SDEs directly at the continuous-time level, avoiding discretization altogether.  
To this end, we work with the infinitesimal generator of the SDE and introduce two functions—a \emph{drift} and a \emph{variant}—that together play the same conceptual roles as in the discrete-time theory: the drift rules out divergence with positive probability, while the variant establishes probabilistic progress toward the target.  
We will show that these certificates are both sufficient and necessary for almost sure reachability of a bounded open set.

\textbf{V1: Drift certificate.}
A function $V:X\to\mathbb{R}_{\ge 0}$ is called a \emph{drift certificate}
if $V$ is \emph{norm-like}, i.e., $V(x)\to\infty$ as $\|x\|\to\infty$, and there exist
a compact set $C\subset X$ and a constant $d>0$ such that for every $m\in\mathbb Z_+$,
\begin{equation}\label{eq:V1_CD1_match}
\mathcal A_m^{\mathrm e} V(x)\le d\,\mathbf 1_C(x),
\qquad \forall x\in O_m,
\end{equation}
where $\mathbf 1_C:X\to\{0,1\}$ denotes the indicator function of the set $C$, i.e.,
$\mathbf 1_C(x)=1$ if $x\in C$ and $\mathbf 1_C(x)=0$ otherwise.

\textbf{V2: Variant certificate.} 
Given a drift $V$ satisfying \textbf{V1}, a function $U:X\to\mathbb{R}$ is a variant if there exist supporting functions $H:\mathbb{R}_{>0}\to\mathbb{R}$, $\delta:\mathbb{R}_{>0}\to\mathbb{R}_{>0}$, $\epsilon:\mathbb{R}_{>0}\to(0,1]$, and $h:\mathbb{R}_{>0}\to\mathbb{R}_{>0}$, such that for every $r>0$ and for all $x$,
\begin{enumerate}[label=(\roman*), leftmargin=1.4em]
  \item $V(x)\le r\ \Rightarrow\ U(x)\le H(r)$.
  \item For every $x$ with $V(x)\le r$ and $U(x)>0$,
  \begin{equation}\label{eq:ctV2}
   \mathbb{P}_{x}\Big(U\big(x(h(r))\big)-U(x)\ \le\ -\delta(r)\Big)\ \ge\ \epsilon(r).
  \end{equation}
\end{enumerate}

Intuitively, the drift and variant conditions ensure that:  
(i) the process does not escape to infinity by \textbf{V1}, and  
(ii) it makes uniform probabilistic progress toward the target whenever it is outside it by \textbf{V2}. 

\begin{remark}\label{rem:ct_vs_dt_certificates} The certificates in \textbf{V1}--\textbf{V2} are continuous-time analogues of the discrete-time drift/variant criteria proposed in~\cite{majumdar2024necessary}. In discrete-time, drift is stated via the one-step decrement $\Delta V(x)=\mathbb E[V(x_{k+1})\mid x_k=x]-V(x)$ and is required to be non-positive outside a compact set. In continuous-time, the corresponding notion is expressed through the generator; since the classical generator may not apply to non-smooth certificates, we use the truncated extended generator $\mathcal A_m^{\mathrm e}$ and impose the localized uniform inequality \eqref{eq:V1_CD1_match} on each $O_m$. Similarly, the discrete-time variant condition enforces a negative decrement of $U$ in one step with positive probability, while in continuous time there is no distinguished unit step, so we require an analogous decrement over an arbitrary time window of length $h(r)$ as in~\eqref{eq:ctV2}.
\end{remark}

In the following, we establish the sufficiency and necessity theorems for almost sure reachability based on these continuous-time certificates.

\begin{theorem}[Sufficiency]\label{thm:ctsuff}
Under Assumptions~\ref{assum:reg} and~\ref{assum:WF} for the SDE~\eqref{eq:sde}, let $G \subset X$ be an open and precompact set. 
If there exist a drift certificate $V$ satisfying \textbf{V1} and a variant certificate $U$ satisfying \textbf{V2} with $G\supset\{x:U(x)\le 0\}$, then 
\begin{equation*}
\mathbb{P}_{x_0}(\sigma_G<\infty)=1\qquad\text{for all }x_0\in X. 
\end{equation*}
\end{theorem}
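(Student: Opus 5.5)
The proof follows the discrete-time drift/variant argument of~\cite{majumdar2024necessary}, adapted to continuous time. The strategy has two steps: first use \textbf{V1} to show the process returns to a sublevel set of $V$ infinitely often, then use \textbf{V2} along a sequence of stopping times to show that reaching $\{U\le 0\}\subset G$ is eventually unavoidable.

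\textbf{Step 1 (recurrence from V1).} Fix $x_0$. Applying Definition~\ref{def:trunc_ext_gen} to $V$ gives
\[
\mathbb E_{x_0}[V(x^m(t\wedge\tau))]\le V(x_0)+d\,\mathbb E_{x_0}\Big[\int_0^{t\wedge\tau}\mathbf 1_C(x^m(s))\,ds\Big]
\]
for bounded stopping times $\tau$. Since $V\ge 0$, the process $V(x^m(t))-d\int_0^t \mathbf 1_C(x^m(s))\,ds$ is a local supermartingale.
\begin{itemize}
\item Take $\tau$ to be the hitting time of $C$ after some time $s$. Then $V(x(t\wedge\tau\wedge T^m))$ is a nonnegative supermartingale up to $\tau$.
\item Because $V$ is norm-like, on the event $\{\tau=\infty\}$ we get $\mathbb P(T^m<\tau)\le V(x(s))/\inf_{O_m^c}V\to 0$ as $m\to\infty$. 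So the process does not explode before $\tau$, and the nonnegative supermartingale $V(x(t))$ converges a.s. on $\{\tau=\infty\}$.
\item Hence $V(x(t))$ stays bounded on that event. Let $r_0:=\sup_C V+1$ (finite, taking $V$ locally bounded, or replacing $C$ by a sublevel set). The conclusion of this step is that a.s. there is some $r$ (random, but we may union over $r\in\mathbb N$) such that $V(x(t))\le r$ for arbitrarily large times $t$.
\end{itemize}
The precise statement I aim for is: a.s. there exist $r\in\mathbb N$ and times $t_k\uparrow\infty$ with $V(x(t_k))\le r$. Using right-continuity of paths (Assumption~\ref{assum:reg}), these times can be taken to be stopping times, for instance successive hitting times of $\{V\le r\}$ separated by at least $h(r)$.

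\textbf{Step 2 (progress from V2).} Fix $r$. Set $\tau_0$ to be the first hitting time of $\{V\le r\}$, and $\tau_{k+1}$ to be the first hitting time of $\{V\le r\}$ after $\tau_k+h(r)$. At each $\tau_k$ with $x(\tau_k)\notin G$ we have $U(x(\tau_k))>0$ and $U(x(\tau_k))\le H(r)$. The strong Markov property (available from the weak Feller property, Assumption~\ref{assum:WF}, together with right-continuous paths) and~\eqref{eq:ctV2} give that, with conditional probability at least $\epsilon(r)$, $U$ drops by $\delta(r)$ over $[\tau_k,\tau_k+h(r)]$.

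Let $N:=\lceil H(r)/\delta(r)\rceil+1$. Any $N$ consecutive successful drops would push $U$ below $0$, i.e. into $G$. The catch is that consecutive drops are measured at $\tau_k+h(r)$, which need not equal $\tau_{k+1}$. So instead I would consider runs where the next window starts exactly at $\tau_k+h(r)$, provided $V(x(\tau_k+h(r)))\le r'$ for an enlarged level $r'$. This means chaining with levels $r\le r_1\le\dots\le r_N$, using the fact that $V(x(\tau_k+h))$ is bounded in probability (by the Step 1 supermartingale bound, $\mathbb P(V(x(h))>R)\le (r+dh)/R$).

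Along this chain each step succeeds with conditional probability at least $\eta:=\prod_j \epsilon(r_j)/2>0$, uniformly over starting points in $\{V\le r\}\setminus G$. A conditional Borel--Cantelli / geometric trials argument over the infinitely many stopping times $\tau_k$ then gives $\mathbb P(\sigma_G=\infty,\ \tau_k<\infty\ \forall k)\le (1-\eta)^\infty=0$. Taking the union over $r\in\mathbb N$ and combining with Step 1 yields $\mathbb P_{x_0}(\sigma_G<\infty)=1$.

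\textbf{Main obstacle.} I expect the main difficulty to be the chaining in Step 2: the drops in~\eqref{eq:ctV2} are only guaranteed at deterministic horizons $h(r)$ that depend on the current level, while the process may leave $\{V\le r\}$ in between. I would first try to control this with the uniform tail bound $\mathbb P_x(V(x(h))>R)\le (V(x)+dh)/R$ coming from \textbf{V1}. That bound lets me choose the escalating levels $r_j$ so that each step of the chain retains probability at least $\epsilon(r_j)/2$.
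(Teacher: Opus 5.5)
There is a genuine gap in Step 2, and it cannot be closed under the stated hypotheses. Your count $N=\lceil H(r)/\delta(r)\rceil+1$ tacitly assumes that every successful step lowers $U$ by $\delta(r)$. Once you escalate to levels $r\le r_1\le r_2\le\cdots$, which you must do to keep $\mathbb P(V(x(h(r_j)))>r_{j+1})\le\epsilon(r_j)/2$, \textbf{V2} at level $r_j$ only guarantees a drop of $\delta(r_j)$ over a window of length $h(r_j)$. Nothing prevents $\delta(r_j)$ from decaying so fast that $\sum_j\delta(r_j)<H(r)$. In that case no chain of any length reaches $\{U\le 0\}$, and there is no uniform $\eta>0$.

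Working at one large level $R$ does not help either. You would need about $H(r)/\delta(R)$ steps, and the exit bound $(r+d\,h(R)H(r)/\delta(R))/R$ need not be small. Your tail estimate controls whether the path stays in the enlarged sublevel sets, not the size of the decrements. The underlying issue is that \textbf{V1}, with $\mathcal A^{\mathrm e}_mV\le d\,\mathbf 1_C$ for an arbitrary compact $C$, gives only recurrence (your Step 1). It allows unbounded excursions during which the progress promised by \textbf{V2} is lost.

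In fact the statement is false as written. Take $X=\mathbb R^2$, $\mathrm{d}x_1=\mathrm{d}W$, $\mathrm{d}x_2=-x_2\,\mathrm{d}t$, and $V=\phi(x_1)+x_2^2$. Here $\phi(s)=|s|$ for $|s|\ge 1$ and $\phi(s)=\tfrac{3}{8}+\tfrac{3}{4}s^2-\tfrac{1}{8}s^4$ for $|s|\le 1$; this is a convex $C^2$ function with $\phi(s)\ge|s|$ and $0\le\phi''\le\tfrac{3}{2}$. Then $\mathcal AV=\tfrac12\phi''(x_1)-2x_2^2\le\tfrac34\mathbf 1_{[-1,1]^2}$ and $V$ is norm-like, so \textbf{V1} holds. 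Now let $U=1+(1+x_1^2)^{-1}$ with $H\equiv 2$, $h\equiv 1$, $\delta(r)=\tfrac{1}{2(1+r^2)}$ and $\epsilon(r)=\mathbb P(W_1\ge\sqrt{1+2r^2})$. Since $V(x)\le r$ forces $|x_1|\le r$, the event $\{x_1(1)^2\ge 1+2r^2\}$ yields the required drop, so \textbf{V2} holds. Moreover $\{U\le 0\}=\emptyset$ lies in $G$, the open unit ball centred at $(0,5)$. Yet from $x_0=(0,0)$ the path stays on $\{x_2=0\}$ and never hits $G$.

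A correct version needs a stronger drift condition, e.g. $\mathcal A^{\mathrm e}V\le 0$ on $X\setminus G$. Then $V(x(t\wedge\sigma_G))$ is a nonnegative supermartingale, so $\mathbb P_{x_0}(\sigma_G=\infty,\ \sup_t V(x(t))>r)\le V(x_0)/r$. Your Step 2 run at a single fixed level $r$, with no escalation, then finishes the proof. The paper's own argument has the same hole. It works on $\{\Xi\models\Box C_V(r)\}$ and then infers $\mathbb P_{x_0}(\Xi\models\Box C_V(r))\to 1$ from non-evanescence, which fails in this example, where $\sup_t|x_1(t)|=\infty$ almost surely.
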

\begin{pf}
\noindent\emph{Proof sketch.} We fix $r>0$ and consider the precompact set $C_V(r):=\{x\in X: V(x)\leq r\}$. By \textbf{V2}(i), the values of $U$ on $C_V(r)$ are uniformly bounded, so $C_V(r)$ can be split into finitely many ``levels'' of $U$.
Sampling the process every $h(r)$ units of time yields a discrete-time chain.
Conditioned on staying in $C_V(r)$, \textbf{V2}(ii) guarantees a uniform positive probability of moving to a lower level whenever $U(x)>0$, so the chain cannot stay forever away from $\{x\in X:  U(x)\le 0\}\subseteq G$.
Letting $r\to\infty$ and using norm-likeness of $V$ and continuity of sample paths gives almost sure reachability of $G$.

Under \textbf{V1}, the conclusion of Theorem~3.1(i) in~\cite{meyn1993stability} applies, and since $x(t)$ is a right-continuous process, it follows that the process is non-evanescent, i.e.,
\begin{equation}\label{eq:non_eva}
    \mathbb P_x\left(\lim_{t\to\infty}\|x(t)\|=\infty\right)=0,
\qquad \forall x\in X.
\end{equation}

We now define the following sets inside $C_V(r)$ by levels of $U$:
\begin{align*}
 G_0&\!:=\!\{x\!\in\! C_V(r):  U(x)\le 0\}\subset G,\\
    G_n&\!:=\!\{x\!\in\! C_V(r): (n-1)\delta(r)\! < \!U(x)\!\le\! n\delta(r)\}, n\in \mathbb{N}_{>0}
\end{align*}
and $B_n:=\cup_{i=0}^n G_i\subseteq C_V(r)$ for all $n\in \mathbb{N}$.  By \textbf{V2} (i), for all $x$ satisfying $V(x)\le r$ we have $U(x)\le H(r)$.
Define the integer $M := \lceil \frac{H(r)}{\delta(r)} \rceil$. Then $M$ is the smallest integer such that $M\delta(r) \ge H(r)$.
Then, for every $x\in C_V(r)$, we have $U(x)\le H(r)\le M\,\delta(r),$ which implies $x\in B_{M}$. Hence $C_V(r)= B_{M}$. 

Let $X_k:=x(kh(r))$, $k\in\mathbb{N}\cup\{0\}$ be sample points belonging to the trajectory $x(\cdot)$ and $\Xi:=\{X_k\}_{k\ge 0}$
be the sampled discrete-time trajectory. This forms a discrete-time Markov chain with transition kernel induced by the semigroup~\eqref{eq:SemiG}. From~\eqref{eq:non_eva} one can observe that,
\begin{equation}\label{eq:non_eva:dis}
   \mathbb P_{x_0}\left(\lim_{k\to\infty}\|X_k\|=\infty\right)=0,
\qquad \forall x_0\in X.
\end{equation}
Let $\Box A$ denote the event that a property $A$ holds for all times, and $\F A$ denote that it holds eventually. Then, for all $x\in C_V(r)$ and $U(x)>0$,  \textbf{V2}(ii) yields
\begin{equation} \label{eq:dis:U}
\mathbb P_{x}\left(U(X_{k+1})-U(X_k)\le -\delta(r) \,|\, X_k=x\right)\ge \epsilon(r),
\end{equation}
which matches the discrete-time variant condition in~\cite{majumdar2024necessary}. From the proof of Theorem~1 in~\cite{majumdar2024necessary}, we have,
\begin{equation*}
\mathbb{P}_{x_0}\bigl( \Xi \models \Diamond B_n \,\wedge\, \Box B_M \bigr)
\;=\;
\mathbb{P}_{x_0}\bigl( \Xi \models \Box B_M \bigr).
\end{equation*}
for all $n\in\{0,1,\ldots, M-1\}$. Then, from $B_0\subset G$, we have,
\begin{align*}
   &\mathbb{P}_{x_0}(\sigma_G < \infty)
\;\ge\;
\mathbb{P}_{x_0}(\Xi \models \Diamond G)
\;\ge\;
\mathbb{P}_{x_0}(\Xi \models \Diamond B_0) \\ &
\;\ge\;
\mathbb{P}_{x_0}(\Xi \models \Box\Diamond B_0)  \ge\;
\mathbb{P}_{x_0}(\Xi \models \Box\Diamond B_0 \wedge \Box B_M)
 \\ & \;=\;
\mathbb{P}_{x_0}(\Xi \models \Box B_M)
\;=\;
\mathbb{P}_{x_0}(\Xi \models \Box C_V(r)).
\end{align*}
Note that the first inequality holds because whenever the sampled sequence $\Xi$ eventually reaches $G$, the underlying continuous-time trajectory must hit $G$ at some finite time. From~\eqref{eq:non_eva:dis}, we have $\lim_{r\rightarrow \infty} \mathbb{P}_{x_0}(\Xi \models \Box C_V(r))=1$ for all $x_0$ and hence $\mathbb{P}_{x_0}(\sigma_G < \infty)=1$ for all $x_0\in X$.\myqed
\end{pf}

In the following, we provide two lemmas that will be used to establish the necessity result.

\begin{Lemma}\label{lem:semicont-prob}
Under Assumption~\ref{assum:WF} for the SDE~\eqref{eq:sde}, for every open set $O\subseteq X$, every closed set $F\subseteq X$, and every fixed $t\geq 0$, the maps
\begin{enumerate}
    \item
$
x\mapsto \mathbb P_x(x(t)\in O)\quad \text{and} \quad x\mapsto \mathbb P_x(x(t)\in F),
$
are lower semicontinuous and upper semicontinuous on $X$, respectively.

\item 
$
x\mapsto \mathbb{P}_{x}(\sigma_{O}\le t)\quad \text{and} \quad x\mapsto \mathbb P_x(\sigma_{F}\le t),
$ are lower semicontinuous and upper semicontinuous on $X$, respectively.
\end{enumerate}

\end{Lemma}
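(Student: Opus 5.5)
For part 1 we will use only Assumption~\ref{assum:WF}, through monotone approximation of indicators. Given an open set $O$, define
\[
\phi_k(y):=\min\{1,\,k\,\mathrm{dist}(y,O^{\mathrm c})\}.
\]
Each $\phi_k$ lies in $C_b(X)$, and $\phi_k\uparrow \mathbf 1_O$ pointwise as $k\to\infty$. Monotone convergence gives
\[
P_t\phi_k(x)\uparrow P_t\mathbf 1_O(x)=\mathbb P_x(x(t)\in O).
\]
By the weak Feller property each $P_t\phi_k$ is continuous. A pointwise supremum of continuous functions is lower semicontinuous, so $x\mapsto\mathbb P_x(x(t)\in O)$ is lower semicontinuous. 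For a closed set $F$ we use $\mathbb P_x(x(t)\in F)=1-\mathbb P_x(x(t)\in F^{\mathrm c})$ with $F^{\mathrm c}$ open, which gives upper semicontinuity. The case $t=0$ is trivial, since $\mathbf 1_O$ is lower semicontinuous and $\mathbf 1_F$ is upper semicontinuous.

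For part 2 we first extend part 1 to finite-dimensional events. Let $0\le s_1<\dots<s_k\le t$ and let $F_1,\dots,F_k$ be closed. Approximate each $\mathbf 1_{F_i}$ from above by a decreasing sequence of $[0,1]$-valued bounded continuous functions $\phi^{(i)}_j$. By the Markov property,
\[
\mathbb E_x\Big[\prod_i \phi^{(i)}_j(x(s_i))\Big]
=P_{s_1}\Big(\phi^{(1)}_jP_{s_2-s_1}\big(\phi^{(2)}_j\cdots P_{s_k-s_{k-1}}\phi^{(k)}_j\big)\Big)(x).
\]
Iterating Assumption~\ref{assum:WF} shows this function is continuous in $x$. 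Letting $j\to\infty$ and using dominated convergence, $x\mapsto\mathbb P_x(x(s_i)\in F_i\ \forall i)$ is an infimum of continuous functions, hence upper semicontinuous. Taking complements with $F_i=O^{\mathrm c}$, the map $x\mapsto\mathbb P_x(\exists i:\,x(s_i)\in O)$ is lower semicontinuous.

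Next we pass to hitting times using right-continuity of paths (Assumption~\ref{assum:reg}). For open $O$, if $x(s)\in O$ for some $s<t$, then $x(q)\in O$ for all rationals $q\ge s$ close enough to $s$. Hence $\{\sigma_O<t\}$ is the increasing union, over nested finite sets $D_n$ of rationals in $[0,t)$, of the events $\{\exists s\in D_n: x(s)\in O\}$. Each of these has lower semicontinuous probability by the previous step, so $\mathbb P_x(\sigma_O<t)$ is a supremum of lower semicontinuous functions and is lower semicontinuous. It remains to handle the events $\{\sigma_O= t\}$. Here I would add the point $t$ itself to each $D_n$. The residual event, in which $x(t)\notin O$ but every neighbourhood of $t$ on the left contains no visit while $\sigma_O=t$, can only occur through visits at times arbitrarily close to $t$ from the right. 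I expect to show it does not affect the $\le$ versus $<$ distinction by writing
\[
\{\sigma_O\le t\}=\{\sigma_O<t\}\cup\{x(t)\in O\}\cup N,
\]
and arguing that $N$ is $\mathbb P_x$-null via continuity of the SDE paths. If that fails, I would state the lemma for $\sigma_O<t$, which is all that the later necessity argument needs. For closed $F$ the event $\{\sigma_F\le t\}$ is not a countable union of finite-grid events of the right monotonicity, so a different argument is required.

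\textbf{Main obstacle.} The closed-set hitting case is the step I expect to be hardest. My first attempt would move to path space. Let $Q_x$ be the law of $x(\cdot)$ on $C([0,t];X)$, which applies because SDE solutions have continuous paths. The set $\{\omega:\exists s\le t,\ \omega(s)\in F\}$ is closed in the sup-norm topology, and $\{\omega:\exists s\le t,\ \omega(s)\in O\}$ is open. By the Portmanteau theorem, if $x_n\to x$ implies $Q_{x_n}\Rightarrow Q_x$, then $\limsup_n Q_{x_n}(\text{closed})\le Q_x(\text{closed})$ and $\liminf_n Q_{x_n}(\text{open})\ge Q_x(\text{open})$. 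These are exactly the two claims of part 2. Weak convergence $Q_{x_n}\Rightarrow Q_x$ follows from convergence of finite-dimensional distributions, which is given by the iterated weak Feller argument above, together with tightness of $\{Q_{x_n}\}$. Tightness is the genuinely delicate point. I would obtain it from the local boundedness of $f,g$ in Assumption~\ref{assum:reg} by localizing to a precompact $O_m$ up to the exit time $T^m$. On $O_m$, Burkholder--Davis--Gundy bounds give a Kolmogorov-type modulus-of-continuity estimate that is uniform for initial points near $x$. The probability of leaving $O_m$ before time $t$ is then controlled uniformly in $n$, using the already established upper semicontinuity of finite-grid exit events together with a limit $m\to\infty$.
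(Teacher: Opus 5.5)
Your part 1 is the paper's argument, and your proof that $x\mapsto\mathbb P_x(\sigma_O<t)$ is lower semicontinuous is correct: you iterate the weak Feller property on finite grids and then take an increasing union over rational grids using right-continuity. The step that fails is the hope that $N=\{\sigma_O=t,\ x(t)\notin O\}$ is $\mathbb P_x$-null. Path continuity gives nothing of the sort, and the open-set half of part 2 is in fact false under the stated hypotheses. Take $X=\mathbb R$, $f\equiv 1$, $g\equiv 0$ (weak Feller, since $P_t\varphi(x)=\varphi(x+t)$) and $O=(0,\infty)$. Then $\sigma_O=\max\{0,-x\}$ surely, so $\mathbb P_x(\sigma_O\le t)=\mathbf 1_{[-t,\infty)}(x)$. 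This is not lower semicontinuous at $x=-t$, and from $x=-t$ the event $N$ occurs with probability one. Noise does not help: $\mathrm dx_1=\mathrm dt$, $\mathrm dx_2=\mathrm dW$, $O=\{x_1>0\}$ behaves the same way. So your fallback to $\sigma_O<t$ is forced, not optional. It suffices downstream: in the necessity proof, replace $\mathbb P_x(\sigma_{O_0}>t)$ by $\mathbb P_x(\sigma_{O_0}\ge t)=1-\mathbb P_x(\sigma_{O_0}<t)$. This is upper semicontinuous and decreases pointwise to $0$, so Dini's argument on the compact $\overline{O_i}$ gives the uniform bound. For the same reason, your remark that the open path-space set $\{\omega:\exists s\le t,\ \omega(s)\in O\}$ yields ``exactly'' the open claim is wrong. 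That event equals $\{\sigma_O<t\}\cup\{x(t)\in O\}$ and misses $N$.

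For closed $F$ your Portmanteau route is sound. With continuous paths the infimum defining $\sigma_F$ is attained, so $\{\sigma_F\le t\}=\{\exists s\le t:\ x(s)\in F\}$ is closed in $C([0,t];X)$ and no analogue of $N$ arises. The tightness step closes as you suggest. Reaching $\{\|y\|\ge R\}$ before $t$ forces one of two events. The first is the grid event $\{\exists s\in D_\Delta:\ \|x(s)\|\ge R/2\}$, whose probability is upper semicontinuous in the initial state (rerun your finite-dimensional step with increasing approximations of open indicators). The second is an oscillation of size $R/2$ on some grid interval for the process stopped on leaving the ball of radius $R$. A $p$-th moment Burkholder--Davis--Gundy bound with $p>2$, plus a union bound over the $t/\Delta$ intervals, makes the oscillation term small uniformly in the initial point. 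The grid term is at most $\mathbb P_x(\sup_{s\le t}\|x(s)\|\ge R/2)$ in the limit, which is small for large $R$.

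This route genuinely differs from the paper's, and the paper's part 2 does not hold up. It equates the probability of a countable union of non-nested events with the supremum of their probabilities. It asserts that an infimum of lower semicontinuous functions is lower semicontinuous. Its displayed identity collapses to $m=1$, because $O^{(m)}$ increases with $m$. The closed case is called ``similar'', yet a grid-based argument there only produces an infimum of lower semicontinuous functions unless uniform control of paths between grid times, such as your tightness estimate, is added. Keep your approach, with the open case restated for $\sigma_O<t$.
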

\begin{pf}
   (1) Let $O\subseteq X$ be open. Since $X$ is a metric space, the distance-to-the-complement
function $x\mapsto \mathrm{dist}(x,O^{\mathrm c}):=\inf_{y\in O^{\mathrm c}} \mathrm{dist}(x,y)$ is continuous ($1$-Lipschitz). For each $k\in\mathbb N$,
define the bounded continuous function
\begin{equation*}
    \varphi_k(x):=\min\{1,\;k\, \mathrm{dist}(x,O^{\mathrm c})\},\qquad x\in X.
\end{equation*}
Then $0\le \varphi_k\le 1$ and $\varphi_k(x)\uparrow \mathbf 1_O(x)$ pointwise as
$k\to\infty$.  Applying $P_t$ and using monotone convergence,
\begin{equation*}
    P_t\varphi_k (x)=\mathbb E_x[\varphi_k(x(t))]
\uparrow
\mathbb E_x[\mathbf 1_O(x(t))]=\mathbb P_x(x(t)\in O),
\end{equation*}
as $k\to\infty$. By the weak Feller property, each $P_t\varphi_k$ is continuous. Therefore,
$
x\mapsto \mathbb P_x(x(t)\in O)
=
\sup_{k\in\mathbb N} P_t\varphi_k (x)
$
is the pointwise supremum of continuous functions, hence it is lower semicontinuous. For closed set $F\subseteq X$ the proof is obtained similarly by noting that $F^{\mathrm c}$ is open and $\mathbb P_x(x(t)\in F)=1-\mathbb P_x(x(t)\in F^{\mathrm c})$.

(2) For any open $O$, and by continuity of $x (t)$,
\begin{equation}
    \{\sigma_O\le t\}
=
\cap_{m\in\mathbb N}\ \cup_{q\in\mathbb Q\cap[0,t]}
\{x(q)\in O^{(m)}\},\label{eq:hit-open-approx}
\end{equation}
where $O^{(m)}:=\{y\in X:\ \mathrm{dist}(y,O^{\mathrm c})>1/m\}.$ Note that $[0,t]$ is uncountable, while $\mathbb Q\cap[0,t]$ is countable and dense. Hence, using rational times lets us replace an uncountable union over times by a countable one, which we can handle with measurability and semicontinuity arguments. For fixed $(m,q)$, the map
$x\mapsto \mathbb P_x(x(q)\in A^{(m)})$ is lower semicontinuous by the first part. Therefore,
\begin{equation*}
    x\!\mapsto\! \mathbb P_x(\cup_{q\in\mathbb Q\cap[0,t]}\{x(q)\in O^{(m)}\}\!)
\!\!=\!\!\!\!\!\!
\sup_{q\in\mathbb Q\cap[0,t]}\!\!\mathbb P_x(x(q)\in O^{(m)}),
\end{equation*}
is lower semicontinuous because it is countable supremum of lower semicontinuous functions. Taking the
countable intersection in \eqref{eq:hit-open-approx} and using path continuity of~\eqref{eq:sde},
\begin{equation*}
    \mathbb P_x(\sigma_O\le t)
=
\inf_{m\in\mathbb N}\ \sup_{q\in\mathbb Q\cap[0,t]}
\mathbb P_x\big(x(q)\in O^{(m)}\big).
\end{equation*}
Since an infimum of lower semicontinuous functions is lower semicontinuous, $x\mapsto\mathbb P_x(\sigma_A\le t)$
is lower semicontinuous. The result for closed set $F$ is obtained similarly.\myqed
\end{pf}

\begin{Lemma}[Down--Meyn--Tweedie Resolvent]\label{lem:DMT_trunc}
Let $x(t)$ be a right-continuous Markov process and, for each
$m\in\mathbb Z_+$, let $x^m(t)$ denote the truncated process defined
in~\eqref{eq:trunc_process}, with exit time $T^m=\sigma_{O_m^c}$.
Assume that for every initial state $x\in X$ and every $T>0$,
\begin{equation*}
\mathbb P_x\left(\sup_{0\le t\le T}\|x^m(t)\|<\infty\right)=1 .
\end{equation*}
Let the truncated extended generator $\mathcal A_m^{\mathrm e}$ be defined
as in Definition~\ref{def:trunc_ext_gen}.
Fix $\eta>0$ and define the truncated resolvent, for bounded measurable
$B:X\to\mathbb R$, by
\begin{equation*}
U_\eta^m B(x)
\;:=\;
\mathbb E_x\left[\int_0^\infty \mathrm{e}^{-\eta t} B(x^m(t))\,\mathrm{d}t\right].
\end{equation*}
Then for every bounded measurable $B$, $U_\eta^m B\in
\mathrm{dom}(\mathcal A_m^{\mathrm e})$ and
\begin{equation*}
\mathcal A_m^{\mathrm e} \, U_\eta^m B
\;=\;
- B + \eta U_\eta^m B,
\qquad \forall x\in O_m .
\end{equation*}
\end{Lemma}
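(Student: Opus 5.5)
Fix $m$ and $\eta>0$, and a bounded measurable $B$ with $|B|\le\|B\|_\infty$. Write $F:=U_\eta^m B$; then $|F|\le \|B\|_\infty/\eta$, and $F$ is Borel measurable by Fubini and the measurability of $(t,\omega)\mapsto x^m(t)$. The measurability uses right-continuity of $x^m$, which holds because $x$ is right-continuous and $x^m$ is absorbed at $\Delta_m$ after $T^m$. The plan is to prove the extended-generator identity directly from Definition~\ref{def:trunc_ext_gen}. Set $\psi:=-B+\eta F$, which is bounded, so $\int_0^t|\psi(x^m(s))|\,\mathrm ds\le t(\|B\|_\infty+\|B\|_\infty)<\infty$ trivially. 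It remains to show that
\begin{equation*}
M_t:=F(x^m(t))-F(x^m(0))-\int_0^t\psi(x^m(s))\,\mathrm ds
\end{equation*}
is a right-continuous martingale.

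\textbf{Martingale property via the Markov property.} First I introduce the bounded random variable $Z:=\int_0^\infty e^{-\eta s}B(x^m(s))\,\mathrm ds$ and the martingale $N_t:=\mathbb E_x[Z\mid\mathcal F_t]$. Splitting $Z$ at time $t$ and applying the Markov property of $x^m$ at time $t$ gives
\begin{equation*}
N_t=\int_0^t e^{-\eta s}B(x^m(s))\,\mathrm ds+e^{-\eta t}F(x^m(t)).
\end{equation*}
The Markov property of $x^m$ follows from the strong Markov property of $x$ at $T^m$, or simply from the fact that $x^m$ is the killed/absorbed process. Hence $e^{-\eta t}F(x^m(t))+\int_0^t e^{-\eta s}B(x^m(s))\,\mathrm ds$ is a martingale. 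Next I apply the integration-by-parts (product) rule for $e^{\eta t}$ times a semimartingale of finite variation plus martingale, namely $Y_t:=e^{-\eta t}F(x^m(t))=N_t-\int_0^te^{-\eta s}B(x^m(s))\,\mathrm ds$. This yields
\begin{equation*}
F(x^m(t))=e^{\eta t}Y_t=F(x^m(0))+\int_0^t e^{\eta s}\,\mathrm dN_s-\int_0^t B(x^m(s))\,\mathrm ds+\eta\int_0^t F(x^m(s))\,\mathrm ds.
\end{equation*}
Since $e^{\eta s}$ is deterministic and bounded on $[0,t]$ while $N$ is a bounded martingale, $\int_0^te^{\eta s}\,\mathrm dN_s$ is a martingale. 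This is exactly $M_t$ with $\psi=-B+\eta F$.

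\textbf{Right-continuity, and the main obstacle.} The delicate point is right-continuity of $M_t$, equivalently of $t\mapsto F(x^m(t))$, since $F$ is only measurable. Here I would take a right-continuous modification of $N$, which exists under the usual conditions, and then define $F(x^m(t))$ through $e^{\eta t}(N_t-\int_0^t\cdots)$. This is the standard Down--Meyn--Tweedie device, with the localization assumption $\sup_{t\le T}\|x^m(t)\|<\infty$ keeping everything within the precompact region. An alternative route would be to show that $F$ is finely continuous for the right process $x^m$ and invoke Meyn--Tweedie, but I would rely on the modification argument, citing~\cite{meyn1993stability}. Finally, the identity is only asserted on $O_m$; the computation above is the same from every starting point, and the restriction merely records that off $O_m$ the process is absorbed.
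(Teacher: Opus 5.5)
There is a genuine gap, and it sits at the step you yourself flag as the main obstacle.

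\textbf{What works.} Your martingale computation is sound. For each fixed $t$, the Markov property gives $N_t=\int_0^t e^{-\eta s}B(x^m(s))\,\mathrm ds+e^{-\eta t}F(x^m(t))$ almost surely. From this, $\mathbb E[M_{t+s}-M_t\mid\mathcal F_t]=0$ follows. The paper itself gives no argument beyond citing \cite[Lemma 4.3(a)]{DMT95}, so this part is more self-contained than the paper's proof.

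\textbf{Where it fails.} You cannot ``define'' $F(x^m(t))$ as $e^{\eta t}\bigl(\tilde N_t-\int_0^t e^{-\eta s}B(x^m(s))\,\mathrm ds\bigr)$, with $\tilde N$ a c\`adl\`ag version of $N$. The random variable $F(x^m(t))$ is already fixed by $F$ and the path. The two expressions agree only almost surely for each fixed $t$. Via Fubini they also agree for a.e.\ $s$, which is what lets you swap them inside $\int_0^t\cdot\,\mathrm ds$. Note too that the product rule must be applied to $\tilde N$, not to $e^{-\eta t}F(x^m(t))$, whose semimartingale property is unknown. What you actually obtain is that $M$ is a martingale with a right-continuous modification, namely $\int_0^t e^{\eta s}\,\mathrm d\tilde N_s$. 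Definition~\ref{def:trunc_ext_gen} requires $M$ itself to be right-continuous. Passing from ``modification'' to ``indistinguishable'' presupposes that both processes are right-continuous, which is exactly what has to be shown.

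\textbf{What is missing.} You need that $t\mapsto U^m_\eta B(x^m(t))$ is almost surely right-continuous. Write $U^m_\eta B=U^m_\eta B^+-U^m_\eta B^-$. Each term is $\eta$-excessive for $x^m$. For a right process, $\eta$-excessive functions are finely continuous, hence right-continuous along almost every path (Blumenthal--Getoor, Sharpe).

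Using this requires the strong Markov property of $x^m$, inherited from $x$ at the stopping time $T^m$. That is more than the simple Markov property in the lemma's hypotheses. It also cannot be bypassed by a continuity argument, because $U^m_\eta B$ need not be continuous for merely measurable $B$. The weak Feller assumption concerns $P_t$, not $P^m_t$, and in any case only yields continuity for continuous $B$.

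So the fine-continuity route you list as an ``alternative'' is in fact the step that closes the proof. The modification device should be dropped.
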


\begin{pf}
See \cite[Lemma 4.3(a) and eq.\ (28)]{DMT95}.\myqed
\end{pf}

\begin{theorem}[Necessity]\label{thm:ctnec}
Under Assumptions~\ref{assum:reg} and~\ref{assum:WF} for the SDE~\eqref{eq:sde}, let $G \subset X$ be an open and precompact set. If
$\mathbb{P}_{x_0}(\sigma_G<\infty)=1$ for all $x_0\in X$, then there exist 
\begin{itemize}[leftmargin=1.2em]
\item a drift $V:X\to\mathbb{R}_{\ge 0}$ satisfying  (\textbf{V1}), and
\item a variant $U:X\to\mathbb{R}$ with supporting functions $H,\delta,\epsilon,h$ that satisfies \textbf{V2} and $G\supset\{x:U(x)\le 0\}$.
\end{itemize}
\end{theorem}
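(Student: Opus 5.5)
The proof splits into two independent constructions: a drift certificate built from the truncated resolvent of Lemma~\ref{lem:DMT_trunc}, and a variant certificate built from finite-horizon hitting probabilities via Lemma~\ref{lem:semicont-prob}.

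\emph{Drift.} Almost sure reachability of the precompact set $G$ from every state implies that the process returns to $C:=\overline G$ infinitely often. By the strong Markov property this rules out $\|x(t)\|\to\infty$, so the process is non-evanescent. Following the converse part of \cite{meyn1993stability}, I would take
\[
V(x):=\sum_{m}\big(1-\mathbb P_x(\sigma_{C}<T^m)\big)\quad\text{or}\quad V(x):=\mathbb E_x[\text{number of excursions to }O_m^c\text{ before }\sigma_C],
\]
suitably weighted. A cleaner route is to use Lemma~\ref{lem:DMT_trunc} with $B=\mathbf 1_{C}$ or $B=\mathbf 1_{O_k^c}$. This gives functions in $\mathrm{dom}(\mathcal A_m^{\mathrm e})$ with explicit generator, and one then sums them with weights $2^{-k}$ to force norm-likeness. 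The resulting inequality $\mathcal A_m^{\mathrm e}V\le d\mathbf 1_C$ holds outside $C$ because, off $C$, the function $V$ is a superharmonic hitting-type quantity. Verifying this inequality, i.e.\ handling the $\eta U^m_\eta$ correction terms and showing that $V$ is finite yet tends to infinity, is the main technical obstacle. I would first try $V(x)=\sum_k a_k\,\mathbb P_x(\sigma_{O_k^c}<\sigma_C)$ with $a_k\uparrow\infty$ chosen so the sum converges. Near infinity these probabilities tend to $1$ for each fixed $k$, which gives norm-likeness. Convergence of the sum uses $\mathbb P_x(\sigma_{O_k^c}<\sigma_C)\to0$ as $k\to\infty$, which follows from $\sigma_C<\infty$ a.s.\ together with path continuity. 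The supermartingale property of $V(x(t\wedge\sigma_C))$ follows from the Markov property, and it then transfers to the truncated extended generator.

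\emph{Variant.} Set $U(x):=1-\mathbb P_x(\sigma_G\le\tau(x))$, or more simply $U(x):=\mathbf 1_{G^c}(x)$ refined by levels. The construction I would commit to follows \cite{majumdar2024necessary}. Define $p_t(x):=\mathbb P_x(\sigma_G\le t)$. By Lemma~\ref{lem:semicont-prob}(2) each $p_t$ is lower semicontinuous, and $p_t\uparrow 1$ as $t\to\infty$ by assumption. On the compact set $C_V(r)$ (compact by norm-likeness together with lower semicontinuity of $V$, or after passing to its closure), Dini-type compactness applies: a lower semicontinuous function attains its minimum on a compact set, and the sets $\{p_t>1/2\}$ are open and increase to cover $C_V(r)$. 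Hence there exists $h(r)$ with $p_{h(r)}\ge 1/2$ on $C_V(r)$. Now take $U(x):=\mathbf 1_{G^c}(x)$, so that $\{U\le0\}=G$ and $H\equiv1$. For $V(x)\le r$ and $x\notin G$ we need $\mathbb P_x(x(h(r))\in G)\ge\epsilon(r)$ with $\delta(r)=1$. Hitting $G$ before time $h(r)$ does not by itself give being in $G$ at time $h(r)$, so I would modify the construction. One option is to stop the comparison at the hitting time. The other, which I commit to, is to evaluate at the fixed time by choosing $h(r)$ and $\epsilon(r)$ with
\[
\inf_{x\in C_V(r)}\mathbb P_x\big(x(h(r))\in G\big)>0 .
\]
This follows from openness of $G$ and Lemma~\ref{lem:semicont-prob}(1): the map $x\mapsto\mathbb P_x(x(t)\in G)$ is lower semicontinuous, and the open sets $\{x:\mathbb P_x(x(t)\in G)>0\}$, $t\in\mathbb Q_{>0}$, cover $X$. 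They cover $X$ because the process hits the open set $G$ a.s., stays in it for a positive time by path continuity, and therefore is in $G$ at some rational time with positive probability. Compactness of $C_V(r)$ then extracts finitely many times $t_1,\dots,t_N$. To merge these into a single time window $h(r)$, I would use a level-indexed $U$ as in \cite{majumdar2024necessary}: let $U$ count the number of rational-time steps needed, or equivalently allow $U$ to decrease by $\delta(r)$ through intermediate levels. The cleanest alternative is to set $h(r):=t_i$ locally and make $h$ depend only on $r$ by taking $U(x):=\min_i\{\,i : \mathbb P_x(x(t_i)\in G)>\epsilon\}$-type levels.

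I expect the drift construction, in particular fitting it into the truncated extended generator framework with the uniform constant $d$, to be the hardest step. The variant step is a compactness-plus-semicontinuity argument, and Lemma~\ref{lem:semicont-prob} was set up precisely for it.
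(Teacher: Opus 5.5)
Both halves of your proposal have genuine gaps.

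\textbf{Drift.} The function you commit to, $V=\sum_k a_kW_k$ with $W_k(x)=\mathbb P_x(\sigma_{O_k^c}<\sigma_C)$, fails for two reasons.

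First, pointwise convergence $W_k(x)\to0$ gives no summability. For one-dimensional Brownian motion with $C=[-1,1]$ and $O_k=(-k,k)$, one has $W_k(x)=(|x|-1)/(k-1)$ for $1\le|x|\le k$. Hence $\sum_k a_kW_k(x)=\infty$ for every $|x|>1$ and every choice $a_k\uparrow\infty$. The paper instead passes to a sparse subsequence with $\sup_{O_i}W_{m_i}<2^{-i}$, obtained from semicontinuity and compactness of $\overline{O_i}$, and sums with unit weights. Norm-likeness then comes from each term attaining its maximum on $D_{m_i}$. The same uniform smallness makes the generator series converge, which is what gives $d<\infty$. (Weights $2^{-k}$ on terms bounded by $1/\eta$, as in your resolvent variant, give a bounded $V$.)

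Second, hitting probabilities are in general not in $\mathrm{dom}(\mathcal A_m^{\mathrm e})$. In the same example $W_k$ has a convex kink at $\pm1$. By Tanaka's formula, the compensator of $W_k(x(t))$ therefore contains the local-time term $(L^{1}_t+L^{-1}_t)/(2(k-1))$, which is not of the form $\int_0^t\psi(x(s))\,\mathrm{d}s$. So the supermartingale property of the process stopped at $\sigma_C$ does not transfer to $\mathcal A_m^{\mathrm e}V\le d\mathbf 1_C$.

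The obstacle you flag is the real crux. Whatever $B$ you feed into Lemma~\ref{lem:DMT_trunc}, the identity $\mathcal A_m^{\mathrm e}U_\eta^mB=-B+\eta U_\eta^mB$ requires $\eta U_\eta^mB\le B$ off $C$. That is, $B$ must be excessive for the resolvent kernel, and $\mathbf 1_C$ and $\mathbf 1_{O_k^c}$ are not. The paper takes $B=W_n=\mathbb P_x(\sigma_{D_n}<\sigma_{O_0})$ and asserts $\mathbb E_x[W_n(x^n(t))]\le W_n(x)$ off $O_0$. Be aware that this holds for the process stopped at $\sigma_{D_n}\wedge\sigma_{O_0}$, not for $x^n$. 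In the Brownian example $W_n$ is convex across $\partial O_0$, and Jensen's inequality gives the strict reverse inequality. By contrast, hitting probabilities of the resolvent chain itself are excessive for the resolvent kernel by the discrete-time first-step argument.

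\textbf{Variant.} Your remark that $\sigma_G\le h$ does not give $x(h)\in G$ is correct, and your rational-time covering is sound. The paper's covering induction actually passes over this point when it infers $\mathbb P_x(x(\tau)\in B_0)>0$ from $x\in B_1\setminus B_0$. However, your final construction does not yield a variant, for two reasons.

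First, the times $t_1,\dots,t_N$ come from compactness of $C_V(r)$ and so depend on $r$. But $U$ must be a single function; only $H,\delta,\epsilon,h$ may depend on $r$.

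Second, an index $U(x)=\min\{i:\mathbb P_x(x(t_i)\in G)>\epsilon_i\}$ measured directly against $G$ cannot be made to drop over a fixed window. Even with $t_i=ih$, the Markov property only relates $\mathbb P_x(x(ih)\in G)>\epsilon_i$ to the mean of $\mathbb P_{x(h)}(x((i-1)h)\in G)$. That guarantees a drop with positive probability only if $\epsilon_{i-1}<\epsilon_i$, and a uniform bound needs a margin such as $\epsilon_{i-1}\le\epsilon_i/2$. So the thresholds must increase outward. But then every level demands $\mathbb P_x(x(ih)\in G)>\epsilon_0$. The levels then cannot cover $X$ whenever $\sup_t\mathbb P_x(x(t)\in G)\to0$ as $|x|\to\infty$, as for Brownian motion with $G=(-1,1)$.

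The missing idea is the paper's recursive layering with a single fixed $\tau$:
$\mathcal G_{n+1}=\mathcal G_n\cup\{x:\mathbb P_x(x(\tau)\in\mathcal G_n)>2^{-(n+1)}\}$.
Each layer is measured against the \emph{previous layer}, not against $G$. So the one-step drop probability $>2^{-n}$ holds by definition, while the thresholds decrease outward. The layers are open by Lemma~\ref{lem:semicont-prob}(1). Compactness of $C_V(r)$ then gives $H(r)=M(r)$, $h(r)=\tau$, $\delta(r)=1$ and $\epsilon(r)\ge2^{-M(r)}$. Your sampled-versus-continuous remark is what must still be addressed to establish $\bigcup_n\mathcal G_n=X$ for the chosen $\tau$.
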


\begin{pf}
\noindent\emph{Proof sketch.}
The proof has two parts, constructing a drift certificate $V$ satisfying \textbf{V1} and a variant certificate $U$ satisfying \textbf{V2}.
First, we choose an open precompact neighborhood $O_0$ of the target $G$ and fix a countable increasing precompact exhaustion $O_0\subset O_1\subset\cdots$ of $X$.
Using first-hitting probabilities
$W_n(x)=\mathbb P_x(\sigma_{D_n}<\sigma_{O_0})$ with $D_n=O_n^{\mathrm c}$, we obtain bounded functions that equal $1$ on $D_n$. We then pass from probabilities to drift certificates by applying the truncated resolvent, which produces bounded functions $V_n$ satisfying the resolvent identity, and hence yielding a non-positive truncated drift outside $O_0$ (on each $O_n$). A subsequence $(m_i)$ is then selected so that $V_{m_i}$ becomes uniformly small on each $O_i$, and we define
$V(x)=\sum_{i\ge 0}V_{m_i}(x)$.
This series converges uniformly on each precompact set (ensuring $V\in\mathrm{dom}(\mathcal A_m^{\mathrm e})$ for all $m$), while still diverging at infinity (hence $V$ is norm-like); this yields \textbf{V1}.
Second, we construct $U$ by building outward ``layers'' $\mathcal G_n$ starting from $G$:
each new layer collects states that reach the previous layer within a fixed time $\tau$ with probability exceeding a prescribed threshold.
The weak Feller property ensures that these layers are open, and almost sure reachability implies that their union covers $X$.
Defining $U$ as the layer index makes $\{x\in X: U(x)\le 0\}=G$, bounds $U$ on each $V$-sublevel set, and guarantees a uniform positive probability of decreasing $U$ by at least one over time $\tau$.
With suitable supporting functions $H,\delta,\epsilon,h$, this gives \textbf{V2} and completes the proof.

Pick an open set $O_0$ such that $G\subset O_0$ and $\overline{O_0}$ is compact.
Fix an increasing exhaustion by open precompact sets
\begin{equation*}
O_0\subset O_1\subset O_2\subset \cdots,\quad
\overline{O_i}\subset O_{i+1},\quad
\bigcup_{i\ge 0}O_i=X.
\end{equation*}
By Lindel\"of's theorem~\cite{kelley2017general} (since $X$ is a separable metric space), such sets exist. For each $n\ge 1$, define $D_n:=O_n^{\mathrm c}$ and
\begin{equation*}
W_n(x):=\mathbb P_x\!\left(\sigma_{D_n}<\sigma_{O_0}\right),\quad \forall x\in X.
\end{equation*}
Then $0\le W_n\le 1$, $W_n(x)=1$ for all $x\in D_n$, and $W_n(x)=0$ for all $x\in O_0$. For each $m\ge 1$, let ${\mathcal A}_n^{\mathrm e}$ denote the extended generator of $x^n(t):=x(t\wedge \sigma_{D_n})$.

If $x\in D_n$, then $W_n(x)=1$, and therefore
\begin{equation*}
\mathbb E_x\!\left[W_n\!\left(x^n(t)\right)\right]\le 1=W_n(x).
\end{equation*}
For $x\in O_n\setminus O_0$, by the strong Markov property,
\begin{align*}
W_n(x)
&=\mathbb P_x\!\left(\sigma_{D_n}<\sigma_{O_0}\right) \nonumber\\
&=\mathbb E_x\!\left[\mathbb P_{x^n(t)}\!\left(\sigma_{D_n}<\sigma_{O_0}\right)\right] =\mathbb E_x\!\left[W_n\!\left(x^n(t)\right)\right].
\end{align*}
Hence, for all $x\in X\setminus O_0$ and all $t\ge 0$,
\begin{equation*}
\mathbb E_x\!\left[W_n\!\left(x^n(t)\right)\right]\le W_n(x).
\end{equation*}
Multiplying by $e^{-\eta t}$, integrating over $t\in[0,\infty)$, and interchanging expectation and integration, yields
\begin{equation}\label{eq:Wn_resolvent_bound}
\mathbb E_x\!\left[\int_0^\infty e^{-\eta t}W_n\!\left(x^n(t)\right)\,dt\right]
\le \frac{1}{\eta}\,W_n(x), \, \forall x\in X\setminus O_0.
\end{equation}
Fix $\eta>0$ and define, for each $n\ge 1$,
\begin{equation*}
 V_n(x):=\mathbb E_x\!\left[\int_0^\infty e^{-\eta t}W_n\!\left(x^n(t)\right)\,dt\right]\le 1/\eta,
\end{equation*}
and by \eqref{eq:Wn_resolvent_bound},
\begin{equation}\label{eq:Vn_small_vs_Wn}
0\le  V_n(x)\le \frac{1}{\eta}\,W_n(x),\quad \forall x\in X\setminus O_0.
\end{equation}
By the truncated resolvent identity in Lemma~\ref{lem:DMT_trunc}, $ V_n\in \mathrm{dom}({\mathcal A}_m^{\mathrm e})$ for any $n\geq m$, and
\begin{equation}\label{eq:drift_each_term}
{\mathcal A}_m^{\mathrm e} V_n(x)=-W_n(x)+\eta  V_n(x),\quad  \forall  x\in O_m,
\end{equation}
because $O_m\subset O_n$. Hence, by~\eqref{eq:Vn_small_vs_Wn}, ${\mathcal A}_m^{\mathrm e} V_n(x)\leq 0$ for all $x\in O_m\setminus O_0$.

By Lemma~\ref{lem:semicont-prob}, the map
$x\mapsto \mathbb P_x(\sigma_{O_0}>t)$ is upper semicontinuous for each $t>0$.
Moreover, because $\mathbb P_x(\sigma_{O_0}<\infty)=1$ for all $x$, we have
$\mathbb P_x(\sigma_{O_0}>t)\downarrow 0$ as $t\to\infty$ for each fixed $x$.
By a standard compactness argument on $\overline{O_i}$, there exists a deterministic time $t_i>0$ such that
\begin{equation*}
\sup_{x\in O_i}\mathbb P_x(\sigma_{O_0}>t_i)<2^{-(i+1)}.
\end{equation*}
Since $O_n\uparrow X$ and the sample paths are continuous, for each fixed $x\in X$,
\begin{equation*}
\mathbb P_x(\sigma_{D_n}<t_i)\longrightarrow 0
\qquad\text{as }n\to\infty.
\end{equation*}
Using upper semicontinuity in $x$ and compactness of $\overline{O_i}$ once again, we may choose recursively
$m_i>\max\{i,m_{i-1}\}$, with the convention $m_{-1}:=0$, such that
\begin{equation*}
\sup_{x\in O_i}\mathbb P_x(\sigma_{D_{m_i}}<t_i)<2^{-(i+1)}.
\end{equation*}
Therefore, for every $x\in O_i$,
\begin{align*}
W_{m_i}(x)
&=\mathbb P_x\!\left(\sigma_{D_{m_i}}<\sigma_{O_0}\right) \nonumber\\
&\le \mathbb P_x\!\left(\sigma_{D_{m_i}}<t_i\right)+\mathbb P_x\!\left(\sigma_{O_0}>t_i\right)<2^{-i}. 
\end{align*}
Together with \eqref{eq:Vn_small_vs_Wn}, this gives
\begin{equation*}
\sup_{x\in O_i}V_{m_i}(x)\le \frac{1}{\eta}\sup_{x\in O_i}W_{m_i}(x)
<\frac{2^{-i}}{\eta}.
\end{equation*}
Hence, for each fixed $m\ge 0$,
\begin{align*}
\sum_{i=m}^\infty \sup_{x\in O_m}V_{m_i}(x)
&\le \sum_{i=m}^\infty \sup_{x\in O_i}V_{m_i}(x) \nonumber\\
&< \frac{1}{\eta}\sum_{i=m}^\infty 2^{-i}
<\infty.
\end{align*}
Thus the series $\sum_{i=0}^\infty V_{m_i}$ converges absolutely and uniformly on every $O_m$. Define
\begin{equation*}
V(x):=\sum_{i=0}^\infty V_{m_i}(x),\qquad \forall x\in X.
\end{equation*}
For each $N\ge 0$, let
\begin{equation*}
V^{(N)}\!(x)\!\!:=\!\!\sum_{i=0}^N\! V_{m_i}\!(x),
\,
g^{(N)}\!(x)\!\!:=\!\!\sum_{i=0}^N\bigl(\!-W_{m_i}(x)+\eta V_{m_i}\!(x)\!\bigr)\!.
\end{equation*}
By \eqref{eq:drift_each_term}, for every fixed $m\ge 1$,
\begin{equation*}
V^{(N)}\in \mathrm{dom}(\mathcal A_m^{\mathrm e})\quad\text{and}\quad
\mathcal A_m^{\mathrm e}V^{(N)}(x)=g^{(N)}(x),
\end{equation*}
for all $x\in O_m$. Moreover, for $i\ge m$ and $x\in O_m$, since $O_m\subset O_i$, we have
\begin{align*}
\bigl|-W_{m_i}(x)+\eta V_{m_i}(x)\bigr|
&\le W_{m_i}(x)+\eta V_{m_i}(x) \nonumber\\
&\le 2\,W_{m_i}(x),
\end{align*}
and therefore
\begin{equation*}
\sup_{x\in O_m}\bigl|-W_{m_i}(x)+\eta V_{m_i}(x)\bigr|
\le 2\sup_{x\in O_i}W_{m_i}(x)
<2^{-i+1}.
\end{equation*}
Hence $\sum_{i=0}^\infty g_i$ converges uniformly on every $O_m$, where
\begin{equation*}
g_i(x):=-W_{m_i}(x)+\eta V_{m_i}(x),
\qquad
g(x):=\sum_{i=0}^\infty g_i(x).
\end{equation*}
Since $V^{(N)}\to V$ uniformly on $O_m$ and $g^{(N)}\to g$ uniformly on $O_m$,
the martingale characterization of $\mathcal A_m^{\mathrm e}$ and dominated convergence imply that
\begin{equation*}
V\in \mathrm{dom}(\mathcal A_m^{\mathrm e})\quad\text{and}\quad
\mathcal A_m^{\mathrm e}V(x)=g(x),\qquad \forall x\in O_m.
\end{equation*}
Now set $C:=\overline{O_0}$, which is compact. If $x\in O_m\setminus C$, then $x\notin O_0$, and hence by \eqref{eq:drift_each_term},
\begin{equation*}
g_i(x)=\mathcal A_m^{\mathrm e}V_{m_i}(x)\le 0,\qquad \forall i\ge 0.
\end{equation*}
Therefore
\begin{equation*}
\mathcal A_m^{\mathrm e}V(x)=g(x)\le 0,\quad \forall x\in O_m\setminus C.
\end{equation*}
On the compact set $C$, the term $g_0$ is bounded, and for $i\ge 1$ we have $C\subset O_1\subset O_i$, so
\begin{equation*}
\sup_{x\in C}|g_i(x)|
\le 2\sup_{x\in O_i}W_{m_i}(x)
<2^{-i+1}.
\end{equation*}
Thus $\sum_{i=0}^\infty g_i$ converges uniformly on $C$, and in particular
\begin{equation*}
d:=\sup_{x\in C}g(x)<\infty.
\end{equation*}
Consequently,
\begin{equation*}
\mathcal A_m^{\mathrm e}V(x)=g(x)\le d\,\mathbf 1_C(x),\quad \forall x\in O_m,\, m\ge 1.
\end{equation*}
It remains to prove that $V$ is norm-like. If $x\in D_{m_i}$, since $W_{m_i}(x)=1$ on $D_{m_i}$, we obtain
\begin{equation*}
V_{m_i}(x)=\int_0^\infty e^{-\eta t}\,dt=\frac{1}{\eta},
\quad x\in D_{m_i}.
\end{equation*}
Now fix $M\in \mathbb N$ and set $N_M:=m_M$. If $x\in D_{N_M}$, then $m_i\le N_M$ for every $i\le M$, hence
$O_{m_i}\subset O_{N_M}$ and therefore $D_{N_M}\subset D_{m_i}$.
Thus
\begin{equation*}
V_{m_i}(x)=\frac{1}{\eta},\qquad i=0,1,\dots,M,
\end{equation*}
and so
\begin{equation*}
V(x)\ge \sum_{i=0}^M V_{m_i}(x)=\frac{M+1}{\eta},
\quad  \forall  x\in D_{N_M}=X\setminus O_{N_M}.
\end{equation*}
Since $(O_{N_M})_{M\ge 0}$ is still an increasing precompact exhaustion of $X$, every sublevel set of $V$
is contained in some $\overline{O_{N_M}}$ and is therefore precompact. Hence $V$ is norm-like.
In particular, if $X$ is a normed space, then $V(x)\to\infty$ as $\|x\|\to\infty$. Thus $V$ satisfies \textbf{V1}.

For constructing a variant $U$, fix a small sampling horizon $\tau>0$ and define the sequence of open sets $\mathcal{G}_0:=G$ and 
\begin{equation}\label{eq:Gn-def-sde}
\mathcal{G}_{n+1}
:=\mathcal{G}_n
\cup
\big\{x\in X:\mathbb{P}_x\big(x(\tau)\in\mathcal{G}_n\big)>2^{-(n+1)}\big\},
\end{equation}
for all $n\in \mathbb{N}$. Note that the set
$\{x:\mathbb{P}_x\big(x(\tau)\in\mathcal{G}_n\big)>2^{-(n+1)}\}$ is open because the SDE~\eqref{eq:sde} is weak Feller; hence all $\mathcal{G}_n$ are open. The sets $\mathcal{G}_n$ give layered decomposition of the entire state space and will be used to construct the levels of variant $U$.

We now show that $\cup_{n=0}^{\infty}\mathcal{G}_n = X$. To this end, we define the following increasing sequence of sets
\begin{equation*}
B_n := \big\{x\in X : \mathbb{P}_x(\sigma_G \le n\tau) > 0\big\}, 
\qquad n\in\mathbb{N}.    
\end{equation*}
Clearly $B_0 = G$.  
Since $\mathbb{P}_x(\sigma_G < \infty)=1$, for every $x\in X$ there exists $n\in\mathbb{N}$ such that $\mathbb{P}_x(\sigma_G \leq  n\tau)>0$ and thus $x\in B_n$.  
Hence $X\subseteq \cup_{n=0}^{\infty}B_n$. Noting that $B_n\subseteq X$ for all $n\in \mathbb{N}$, we can conclude  $\cup_{n=0}^{\infty}B_n = X$. Figure~\ref{fig:-1} provides an illustration of the layered sets $B_n$ and $\mathcal{G}_n$, showing how they expand outward from the target $G$ and how trajectories may enter successive layers over time.

\begin{figure}[t]
\centering
\includegraphics[width=0.48
\textwidth]{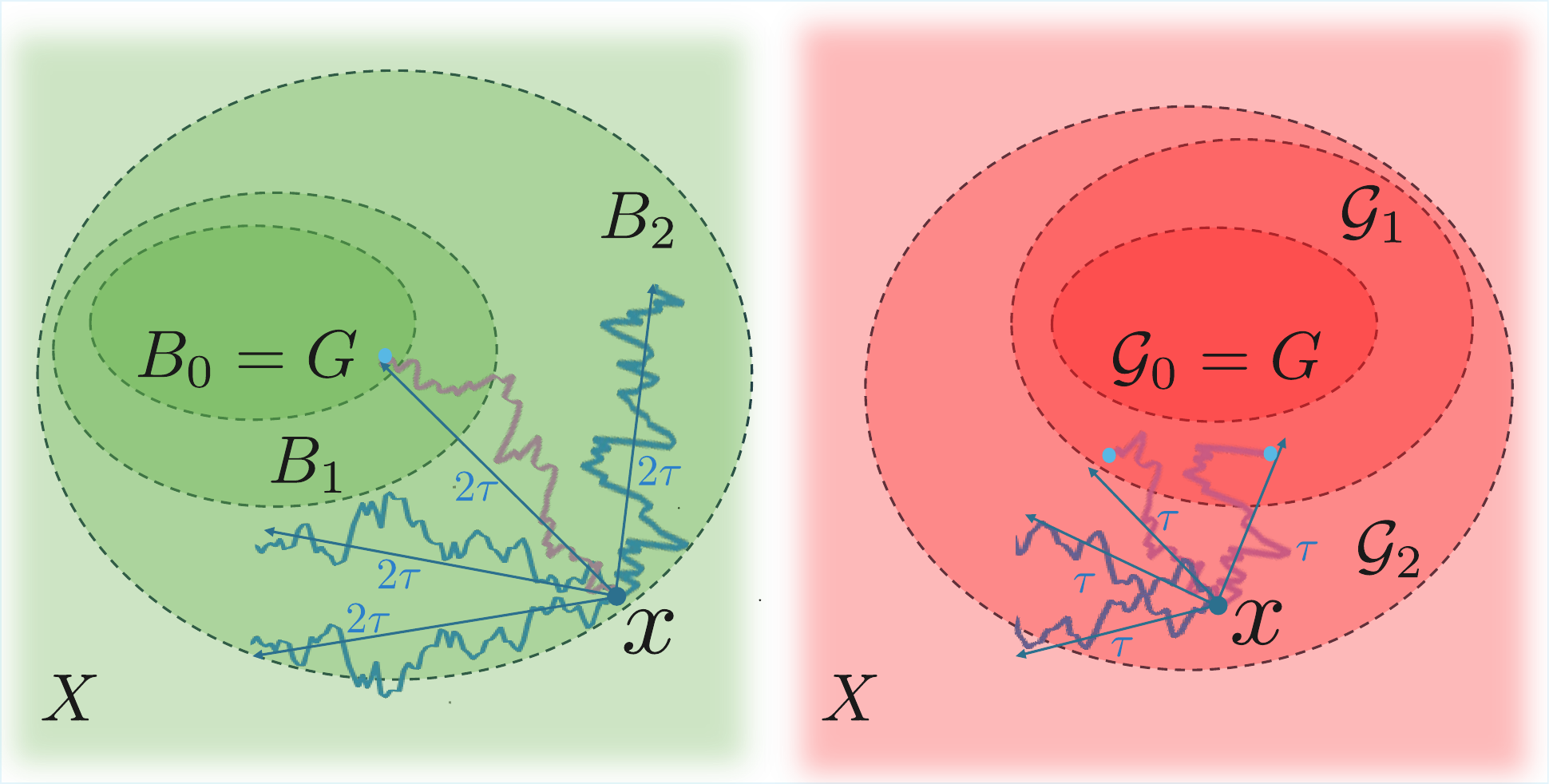}
  \caption{Illustration of the layered sets $B_n$ (left) and $\mathcal{G}_n$ (right) used in the proof.   Left: $B_n$ collects all states from which the process can reach the target set $G$ within time $n\tau$ with positive probability.  Right: $\mathcal{G}_n$ includes $\mathcal{G}_{n-1}$ and all states that reach $\mathcal{G}_{n-1}$ within time $\tau$ with probability exceeding $2^{-n}$.   In both constructions, the layers expand outward until they cover the entire state space.}
  \label{fig:-1}
\end{figure}

We will show by induction on $k$ that
$
B_k \subseteq \cup_{n=0}^{\infty}\mathcal{G}_n
$.
The base case $k=0$ holds because $B_0 = G = \mathcal{G}_0$. Take any $x\in B_{1}\setminus B_0$ but $x\notin \cup_{n=0}^{\infty}\mathcal{G}_n$. The first implies $\mathbb{P}_x(x(\tau)\in B_0)>0$. For the second, from~\eqref{eq:Gn-def-sde} one can observe that,
\begin{equation*}
    x\notin \mathcal{G}_{n+1} \Leftrightarrow \mathbb{P}_x(x(\tau)\in \mathcal{G}_{n})\leq 2^{-(n+1)},
\end{equation*}
for all $n$. Then we have
\begin{align}\label{eq:cont}
  \mathbb{P}_x(x(\tau)\in \mathcal{G}_m)&=   \mathbb{P}_x(x(\tau)\in G)
\\
&+ 
\sum_{n=0}^{m} 
\mathbb{P}_x\big(x(\tau)\in\mathcal{G}_{n+1}\setminus\mathcal{G}_n\big) \leq  2^{-(m+1)}\nonumber
\end{align}
for all $m\in \mathbb{N}$ because $x\notin \cup_{n=0}^{\infty}\mathcal{G}_n$. Hence $\mathbb{P}_x(x(\tau)\in G)\leq 2^{-(m+1)}$ for all $m$ enforcing $\mathbb{P}_x(x(\tau)\in G)= 0$ which is a contradiction. 

Assume that $B_k \subseteq \cup_{n=0}^{\infty}\mathcal{G}_n$.  
Take any $x\in B_{k+1}\setminus B_k$ but $x\notin \cup_{n=0}^{\infty}\mathcal{G}_n$. These imply $\mathbb{P}_x(x(\tau)\in B_0)=0$, $\mathbb{P}_x(x(\tau)\in B_k)>0$, and 
\begin{align*}
  0<&\mathbb{P}_x(x(\tau)\!\in \!B_k)\!\leq  \!\mathbb{P}_x(x(\tau)\!\in\! \cup_{n=0}^{\infty}\mathcal{G}_n)\!=\!\mathbb{P}_x(x(\tau)\!\in \!G)
+ \\  &
\sum_{n=0}^{\infty} 
\mathbb{P}_x\big(x(\tau)\!\in\!\mathcal{G}_{n+1}\!\setminus\!\mathcal{G}_n\big)
 \!=\! 
\sum_{n=0}^{\infty} 
\mathbb{P}_x\big(x(\tau)\!\in\!\mathcal{G}_{n+1}\!\setminus\!\mathcal{G}_n\big). 
\end{align*}
Then there exists some $n$ such that $0<\mathbb{P}_x\big(x(\tau)\in\mathcal{G}_{n+1}\setminus\mathcal{G}_n\big)$. Similar contradiction to~\eqref{eq:cont}, we conclude $B_{k+1}\in \cup_{n=0}^{\infty}\mathcal{G}_n$.
Combining with $\cup_{k=0}^{\infty}B_k = X$ yields
$
{\cup_{n=0}^{\infty}\mathcal{G}_n = X.}
$

Define the function \(U : X \to \mathbb{R}\) by
\begin{equation} \label{eq:U-def-sde}
    U(x) =
\begin{cases}
0, & \text{if } x \in G, \\[6pt]
n+1, & \text{if } x \in \mathcal{G}_{n+1} \setminus \mathcal{G}_n,\quad n \in \mathbb{N}.
\end{cases}
\end{equation}
Let $C_V(r):=\{x:V(x)\le r\}$.
Almost sure reachability and compactness of $C_V(r)$ imply that there exists
an integer $M(r)$ such that $C_V(r)\subseteq\mathcal{G}_{M(r)}$.
Hence $U$ is bounded on the sublevel set $H(r):=M(r)$.

For $n\ge1$ and $x\in\mathcal{G}_n\setminus\mathcal{G}_{n-1}$,
the definition \eqref{eq:Gn-def-sde} gives
\begin{equation*}
    \mathbb{P}_x\big(x(\tau)\in\mathcal{G}_{n-1}\big)>2^{-n}.
\end{equation*}
Because $U(x)=n$ on $\mathcal{G}_n\setminus\mathcal{G}_{n-1}$ and
$U(y)\le n-1$ iff $y\in\mathcal{G}_{n-1}$,
we obtain,
\begin{equation}\label{eq:prob-drop-sde}
\mathbb{P}_x\big(U(x(\tau))-U(x)\le-1\big)
=\mathbb{P}_x\big(x(\tau)\in\mathcal{G}_{n-1}\big)
\ge 2^{-n}.
\end{equation}

We use the fixed sampling horizon $\tau$ as the time step in the variant condition $h(r):=\tau $ and $\delta(r):=1$. Within each sublevel set $C_V(r)$, define
\begin{align*}
    &\epsilon(r)
:=\\ &\min_{1\le n\le M(r)}
\inf_{x\in(\mathcal{G}_n\setminus\mathcal{G}_{n-1})\cap C_V(r)}
\mathbb{P}_x\big(x(\tau)\in\mathcal{G}_{n-1}\big)
>2^{-M(r)}.
\end{align*}
Then from \eqref{eq:prob-drop-sde},
for every $x\in C_V(r)$ with $U(x)>0$,
\begin{align*}
    &\mathbb{P}_x\big(U(x(h(r)))-U(x)\le-\delta(r)\big)
=\\ &\mathbb{P}_x\big(U(x(\tau))-U(x)\le-1\big)
\ge\epsilon(r).
\end{align*}
With
\begin{equation*}
    H(r)=M(r),\quad \delta(r)=1,\quad \epsilon(r)>0,\quad h(r)=\tau,
\end{equation*}
the function $U$ in \eqref{eq:U-def-sde} satisfies the continuous-time variant conditions in~\textbf{V2}.\myqed
\end{pf}

In this section, we introduced drift and variant certificates for continuous-time stochastic systems and established that they fully characterize almost sure reachability.  
The sufficiency theorem showed that the existence of such certificates guarantees that the process reaches the target set with probability one, while the necessity theorem demonstrated that, whenever almost sure reachability holds, appropriate drift and variant functions can always be constructed.  
Together, these results provide a complete  framework and furnish the theoretical foundation for analyzing reachability directly in continuous time, without resorting to discretization.  
In the next section, we revisit the double-well example and use the proposed certificates to explicitly construct continuous-time drift and variant functions.

\section{Example Revisited: Constructing Certificates for the Double-Well Langevin system}\label{sec:exam:cont}
We now return to the double-well Langevin system from Section~\ref{sec:motivational-example} and show how the continuous-time drift and variant certificates developed in Section~\ref{sec:cts-certificates} allow us to certify almost sure reachability. We define the target set as a small neighborhood of the right well, $G=(1-2\rho, 1+2\rho)$, for some small $\rho>0$.

\medskip
\noindent\textbf{Drift certificate.}
We first construct a drift function $V$ for the SDE.  A natural choice for this one-dimensional system is the quadratic function $V(x)=x^2$, which is norm-like and bounded in any compact set. Since this function is sufficiently smooth, we compute its infinitesimal generator as in~\eqref{eq:inf:exp},
\begin{align*}
    \mathcal {A} V(x) &= (-4x^3 + 4x)(2x) + (1/2)2\sigma^2 
         \\ &= -8 x^4 + 8 x^2 + 0.4.
\end{align*}
One can easily see that $\mathcal {A} V(x)\leq 0$ for all $|x|\geq 2$. Hence $V$ satisfies condition \textbf{V1}.

\medskip
\noindent\textbf{Variant certificate.}
We now construct a variant function that ensures probabilistic progress toward the target set as in \textbf{V2}.  For a design parameter $\lambda>0$, we define a variant candidate as follows
\begin{equation*}
U(x):=\frac{1}{\lambda}\big(1-\mathrm{e}^{-\lambda \zeta(x)}\big),\quad \zeta(x):=(x-1)^2-\rho^2.   
\end{equation*}
Let $C_V(r):=\{x\in \mathbb{R}: V(x)\le r\}=\{x\in \mathbb{R}: |x|\le \sqrt r\}$. Then
\begin{equation*}
\zeta(x) \le (\lvert x\rvert+1)^2-\rho^2
 \le (\sqrt r+1)^2-\rho^2=: \zeta_{\max}(r),    
\end{equation*}
and, because $s\mapsto\lambda^{-1}(1-\mathrm{e}^{-\lambda s})$ is increasing,
\begin{equation*}
U(x) \le\ H(r):=\frac{1}{\lambda}\Big(1-\mathrm{e}^{-\lambda \zeta_{\max}(r)}\Big),
\quad \forall x\in C_V(r).    
\end{equation*}
Thus \textbf{V2}(i) holds with this explicit $H(r)$. We verify \textbf{V2}(ii) for all $x\in K_r:=C_V(r) \cap \{x\in\mathbb{R} : U(x)>0\}$.

The target inclusion $\{x\,:\, U(x)\leq 0\}\subset G$ holds by construction. 

For any small $h(r)=\tau>0$, and from It\^o’s formula,
\begin{align*}
Y:=&\,U(x(\tau))-U(x)\\ =&\int_0^\tau \mathcal{A}U(x(s))\,\mathrm{d}s
+\sigma \int_0^\tau U'(x(s))\,\mathrm{d}W(s).   
\end{align*}
We know that
\begin{equation*}
\mathbb{E}[Y]
=\mathbb{E}\int_0^\tau \mathcal{A}U(x(s))\,\mathrm{d}s,
\end{equation*}
where 
\begin{align*}
\mathcal{A}U(x)
&= f(x)\,U'(x)+\tfrac12\sigma^2\,U''(x) \notag\\
&= \mathrm{e}^{-\lambda \zeta(x)}\Big(f(x)\,\zeta'(x)+\tfrac12\sigma^2\zeta''(x)
-\tfrac12\sigma^2\lambda \big(\zeta'(x)\big)^2\Big) \notag\\
&= \!\mathrm{e}^{\!-\!\lambda \zeta(x)}\underbrace{\Big(-8x(x-1)^2(x+1)\!+\!\tfrac{2}{5}\!-\!\tfrac{4}{5}\lambda (x\!-\!1)^2\Big)}_{=:\beta(x)}.
\end{align*}
Since $x(x+1) \ge -\tfrac{1}{4}$ and $\rho^2 \leq (x-1)^2$ on $K_r$, for any $\lambda>\tfrac{5}{2}$, we have,
\begin{equation*}
    \beta(x)\leq \frac{2}{5} - \rho^2\Big(\frac{4}{5}\lambda + 8(-\frac{1}{4})\Big)
       = \frac{2}{5} - \rho^2\Big(\frac{4}{5}\lambda - 2\Big).
\end{equation*}
Therefore, if we pick $\lambda$ such that
$
\lambda\ >\frac{5}{2}+\frac{1}{2\rho^2}\,,
$
then we get $\beta(x)<0$. Hence, for all $x\in K_r$,
\begin{equation*}
\mathcal{A}U(x) \le \mathrm{e}^{-\lambda \zeta(x)}\beta(x) \le \mathrm{e}^{-\lambda \zeta_{\max}(r)}\beta_r
=: -\mu(r) < 0,   
\end{equation*}
with $\mu(r)>0$, where $\beta_r:=\max_{x\in K_r} \beta(x)<0$. Consequently, for any small $h(r)=\tau>0$, $\mathbb{E}[Y]
\le -\mu(r)\tau$, for all $x\in K_r$.
Moreover,
\begin{equation*}
    \mathrm{Var}(Y)
=\mathbb{E}\int_0^\tau \sigma^2 U'(x(s))^2 \mathrm{d}s.
\end{equation*}
On $K_r$, $\zeta(x)\ge 0$ and thus
\begin{equation*}
    \lvert U'(x)\rvert=\lvert \mathrm{e}^{-\lambda \zeta(x)}\,2(x-1)\rvert
 \le 2 (\sqrt r+1)
=: L_r.
\end{equation*}
Hence
$
\mathrm{Var}(Y)
 \le \sigma^2 L_r^2\tau
 = \frac{2}{5}L_r^2\tau,$
 for all $x\in K_r$.
Now choose $\delta(r):=\tfrac{1}{2}\mu(r)\tau$. Cantelli’s one–sided inequality~\cite{ghosh2002probability} gives
\begin{align*}
    \mathbb{P}\big(Y\le -\delta(r)\big)
 &\ge
\frac{\big(\mathbb{E}[-Y]-\delta(r)\big)^2}{\mathrm{Var}(Y)+\big(\mathbb{E}[-Y]-\delta(r)\big)^2}
\\  &\ge
\frac{\mu(r)^2\,\tau}{\mu(r)^2\,\tau+\tfrac{2}{5}L_r^2}
=: \epsilon(r,\tau) >\ 0.
\end{align*}
This verifies \textbf{V2}(ii) uniformly for all $x\in K_r$.

This example highlights the advantage of analyzing almost sure reachability directly in continuous time using the proposed certificates, in contrast to the discrete-time Euler approximation, for which no polynomial drift certificate exists.

\section{Reachability Characterization for Linear SDEs}\label{sec:linear-sde}
Linear SDEs are a canonical benchmark for understanding almost sure reachability in continuous time. They arise as local linearizations of nonlinear systems, as well as fundamental models in control, physics, and finance~\cite{lindquist2015linear}. In this section, we specialize the drift--variant framework to linear SDEs with additive noise and show that almost sure reachability of a bounded target is determined by the spectral structure of the system matrices and, in the critical case, by the number of eigenvalues with zero real part.

\begin{theorem}\label{thm:cts-linear-ASR}
Consider SDE~\eqref{eq:sde} with $f(x(t))=Ax(t)$ and $g(x)=B\neq 0$, 
where $A \in \mathbb{R}^{n\times n}$, $B \in \mathbb{R}^{n\times m}$ are constant matrices and $B$ is full row rank. Let $G \subset \mathbb{R}^n$ be an open, bounded set containing the origin. Denote the spectral abscissa $\alpha(A) := \max\{\operatorname{Re}(\lambda) : \lambda \in \sigma(A)\}$  and let $E_A$ be the real invariant subspace spanned by the real and imaginary parts of all eigenvectors of $A$ associated with eigenvalues $\lambda$ with $\operatorname{Re}(\lambda)=0$. Then:
\begin{enumerate}[label=(\roman*), leftmargin=1.4em]
\item[\textnormal{(i)}] If $\alpha(A) < 0$ (i.e., $A$ is Hurwitz), then $G$ is almost surely reachable from every initial condition. Moreover, there exist quadratic drift and variant certificates $V(x)=x^\top P x$ and $U(x)=x^\top S x - b$ satisfying \textbf{V1}--\textbf{V2}.

\item[\textnormal{(ii)}] If either $\alpha(A) > 0$, or $\alpha(A) = 0$ but there exists a Jordan block of size greater than one associated with some eigenvalue $\lambda$ with $\operatorname{Re}(\lambda)=0$, then no norm-like drift $V$ can satisfy the drift condition~\textbf{V1} for any compact set $C$ and therefore $G$ is not almost sure reachable.

\item[\textnormal{(iii)}] Suppose $\alpha(A)=0$ and all eigenvalues with $\operatorname{Re}(\lambda)=0$ have Jordan blocks of size one. Then:
\begin{itemize}
\item If $\dim(E_A) \le 2$, then $G$ is almost sure reachable and  one can construct a logarithmic-type drift $V(x)= \sqrt{\ln\|x\|_\star}$ (for a suitable weighted norm $\|\cdot\|_\star$) and a quadratic variant $U(x)=\|x\|^2_\star - b$ satisfying \textbf{V1}--\textbf{V2}.

\item If $\dim(E_A) \ge 3$, then almost sure reachability of $G$ fails.
\end{itemize}
\end{enumerate}
\end{theorem}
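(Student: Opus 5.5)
The plan is to treat the three regimes separately. In each case we either exhibit certificates and invoke Theorem~\ref{thm:ctsuff}, or we show that reachability fails and then appeal to Theorem~\ref{thm:ctnec} to conclude that no drift satisfying \textbf{V1} can exist. Assumptions~\ref{assum:reg}--\ref{assum:WF} hold automatically because the coefficients are affine, hence globally Lipschitz.

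\textbf{Case (i): $A$ Hurwitz.} Solve the Lyapunov equation $A^\top P+PA=-I$ with $P\succ0$, and take $V(x)=x^\top Px$. Then
\begin{equation*}
\mathcal A V(x)=-\|x\|^2+\operatorname{tr}(B^\top PB),
\end{equation*}
which is nonpositive outside a ball. So \textbf{V1} holds with $C$ that ball and $d=\operatorname{tr}(B^\top PB)$. For the variant, take $S=P$ and $b>0$ small enough that $\{x^\top Px\le b\}\subset G$; this is possible because $0\in G$ and $G$ is open. Condition \textbf{V2}(i) is immediate. For \textbf{V2}(ii) we would reproduce the Cantelli argument of Section~\ref{sec:exam:cont}, but replace the time-integrated drift bound with the explicit Gaussian law of $x(\tau)$:
\begin{equation*}
x(\tau)\sim\mathcal N\bigl(e^{A\tau}x,\Sigma_\tau\bigr),\qquad \Sigma_\tau\succ0,
\end{equation*}
where $\Sigma_\tau$ is nondegenerate because $B$ has full row rank. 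Since the density is continuous in $x$ and positive everywhere, the map $x\mapsto\mathbb P_x\bigl(U(x(\tau))\le U(x)-\delta\bigr)$ is continuous and positive on the compact set $C_V(r)\cap\{U\ge0\}$, provided we choose $\delta(r)=b/2$. Its minimum over that set then serves as $\epsilon(r)$.

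\textbf{Case (ii).} We show that $G$ is not almost surely reachable and use Theorem~\ref{thm:ctnec} to rule out any drift. Decompose $\mathbb R^n$ into $A$-invariant subspaces and project onto the unstable subspace, or onto the part carrying the nontrivial imaginary-axis Jordan block. If $\alpha(A)>0$, the projected coordinate $y=\Pi x$ satisfies $e^{-A_u t}y(t)\to y_\infty$ almost surely, and $y_\infty$ has a nondegenerate Gaussian law. Consequently $\|y(t)\|\to\infty$ with probability one, and $\mathbb P_x(\sigma_G=\infty)>0$ whenever $\|x\|$ is large. In the Jordan case, take the top generalized-eigenvector coordinate; it equals a Brownian motion integrated against a polynomial factor $t^k$, and we would show it diverges in norm with positive probability, so evanescence again contradicts \eqref{eq:non_eva}. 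Writing this rigorously is delicate. The contrapositive of Theorem~\ref{thm:ctsuff}'s non-evanescence step then gives that no norm-like $V$ satisfies \textbf{V1}.

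\textbf{Case (iii).} Restrict to the subspace $E_A$. There $A$ is similar to a skew-symmetric matrix, so some weighted norm $\|\cdot\|_\star$ is preserved by the deterministic flow. The complementary Hurwitz part is handled as in (i), so the critical coordinates behave like a $k$-dimensional rotated Brownian motion with $k=\dim E_A$. The classical Pólya dichotomy then applies: recurrence for $k\le2$ and transience for $k\ge3$. For $k\le2$, combine the Lyapunov quadratic on the stable part with $\sqrt{\ln\|x\|_\star}$ on $E_A$ and check $\mathcal AV\le0$ for large $\|x\|$; this hinges on the It\^o correction $\tfrac12\operatorname{tr}(\cdot)$ of the logarithm vanishing or being negative when $k\le2$. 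The variant is obtained by the same Gaussian-density argument as in (i). For $k\ge3$, we would show that $\|x\|_\star^{2-k}$ restricted to $E_A$ is a supermartingale outside a ball. This yields a positive probability of never hitting $G$, and Theorem~\ref{thm:ctnec}'s contrapositive completes the argument.

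\textbf{Main obstacle.} I expect the generator computation in case (iii) to be the hardest step. The projected noise covariance on $E_A$ need not be isotropic in the $\|\cdot\|_\star$ geometry, and the stable coordinates are coupled to the critical ones through the noise. I would first try to choose $\|\cdot\|_\star$ so that the flow is orthogonal and the noise covariance on $E_A$ becomes the identity. If both cannot be achieved at once, the fallback is to bound the covariance between its extreme eigenvalues and absorb the resulting cross terms into the Hurwitz part.
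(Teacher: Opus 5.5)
The failing steps are the Jordan case of (ii) and the drift construction in (iii) for $k\le2$; there is also a logical slip in how you use Theorem~\ref{thm:ctnec}. Your overall route is the paper's: a Lyapunov function for (i), evanescence for (ii), and spectral splitting plus P\'olya for (iii). Your (i) and the $\alpha(A)>0$ half of (ii) are sound and more explicit than the paper's sketch. In particular, the Gaussian-density/compactness argument for \textbf{V2}(ii) works for every $A$ once $BB^\top\succ0$.

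\textbf{The slip.} Theorem~\ref{thm:ctnec} only says that reachability implies certificates. It can never turn failure of reachability into non-existence of a drift. The paper uses it the other way round (no drift implies not reachable). The correct tool is the one you use in (ii): \textbf{V1} forces non-evanescence, which contradicts divergence.

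\textbf{The Jordan case of (ii) cannot be done coordinatewise.} Take $A=\bigl(\begin{smallmatrix}0&1\\0&0\end{smallmatrix}\bigr)$ and $B=I_2$. Then $x_2=x_2(0)+W_2$ and $x_1=x_1(0)+x_2(0)t+\int_0^tW_2\,ds+W_1$.
\begin{itemize}
\item Both coordinates change sign at arbitrarily large times a.s.; for $x_1$ this follows from Brownian scaling and Kolmogorov's 0--1 law. So no single coordinate diverges.
\item The stochastic part ($\sim t^{3/2}$) outgrows the deterministic part ($\sim t$), so a domination argument fails too. The paper's sketch rests on the same heuristic.
\end{itemize}
Divergence is a joint, dimension-counting effect. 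Here $\det\operatorname{Cov}(x(t))=t^2+t^4/12$, so the density of $x(t)$ is $O(t^{-2})$ uniformly. Hence every bounded set has finite expected occupation time, and strong Markov plus Borel--Cantelli gives $\|x(t)\|\to\infty$ a.s. In general, run this on the autonomous projection $\Pi_cx$ onto the critical generalized eigenspace. This works because $\Pi_c$ commutes with $A$ and $\Pi_cB$ is onto, and there a nontrivial Jordan block forces $\det\operatorname{Cov}\gtrsim t^4$. The same computation with $\det\operatorname{Cov}\asymp t^k$ settles $k\ge3$ in (iii). Your $\|y\|^{2-k}$ is not a supermartingale when the projected noise is anisotropic in the flow-invariant norm.

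\textbf{The drift in (iii) for $k\le2$ fails, and not for the reason you flag.} Adding the stable-block quadratic cannot work. One has $\mathcal A(z^\top Pz)=-z^\top Qz+\operatorname{tr}(P\Sigma_z)$, where $A_s^\top P+PA_s=-Q$ and $\Sigma_z$ is the stable-block noise covariance. This is a positive constant near $z=0$, uniformly in $y$, while the logarithm supplies only $O(\|y\|^{-2})$ of negativity.
\begin{itemize}
\item Concretely, for $A=\operatorname{diag}(0,0,-1)$, $B=I_3$ and $V=\tfrac12z^2+\sqrt{\ln\|y\|}$, one gets $\mathcal AV=\tfrac12-z^2-\frac{1}{8\|y\|^2(\ln\|y\|)^{3/2}}>0$ on the unbounded slab $\{|z|<\tfrac12,\ \|y\|\ge2\}$.
\item The Hurwitz dissipation vanishes exactly there, so nothing can be absorbed into it.
\item A single quadratic norm $\|x\|_\star^2=\|y\|^2+\kappa z^2$ fails too: at $z=0$, with $\ell=\ln\|y\|$, $\mathcal A\sqrt{\ln\|x\|_\star}=\frac{1}{\|y\|^2\sqrt\ell}\bigl(\frac\kappa4-\frac1{8\ell}\bigr)>0$.
\end{itemize}

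\textbf{Your anisotropy fallback also fails.} Take $A=\bigl(\begin{smallmatrix}0&-1\\1&0\end{smallmatrix}\bigr)$ and $B=\operatorname{diag}(2,1)$, so $\Sigma=\operatorname{diag}(4,1)$. Any norm $N$, smooth off the origin, for which $\sqrt{\ln N}$ satisfies \textbf{V1} must be rotation invariant. Otherwise the zero-mean $O(1)$ drift term $\partial_\theta\ln N$ is positive on some ray. Along the $x_2$-axis one then finds $\mathcal A\sqrt{\ln\|x\|}=\frac{3}{4\|x\|^2\sqrt{\ln\|x\|}}-\frac{1}{8\|x\|^2(\ln\|x\|)^{3/2}}>0$. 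The leading coefficient $\operatorname{tr}\Sigma-2\hat x^\top\Sigma\hat x$ changes sign with the angle, and bounds on the eigenvalues of $\Sigma$ cannot fix a sign. So the literal template $\sqrt{\ln\|x\|_\star}$ is unattainable in general, and the paper's sketch asserts it without addressing this.

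\textbf{How to close (iii).} You have two options.
\begin{itemize}
\item Add correctors: an angular term $c(\theta)/\|y\|^2$ that averages out the sign-changing coefficient, and a stable-block term whose weight decays in $\|y\|$.
\item Prove recurrence directly and take the drift from Theorem~\ref{thm:ctnec}. Here $p_t(x,\cdot)\asymp t^{-k/2}$ on bounded sets, so $\int^\infty t^{-k/2}\,dt=\infty$; combine this with the recurrence/transience dichotomy for this elliptic, strong Feller diffusion.
\end{itemize}
Either way, your Gaussian-density argument still supplies the quadratic variant.
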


\begin{pf} We provide a sketch of the proof. 
Part~(i) one can solve the Lyapunov equation $A^\top P + P A = -Q$ with $Q\succ 0$ and set $V(x)=x^\top P x$. Then
\begin{equation*}
\mathcal{A}V(x) = -x^\top Q x + \operatorname{tr}(P B B^\top)\le 0    
\end{equation*}
outside an ellipsoid, hence, \textbf{V1} holds. A quadratic variant $U(x)=x^\top S x - b$ satisfies \textbf{V2} by combining It\^o’s formula with the full rankness of $B$.

For part~(ii), if $\alpha(A)>0$ or there is a Jordan block of size greater than $1$ with $\operatorname{Re}(\lambda)\ge 0$, the deterministic flow $\mathrm{e}^{At}x_0$ grows exponentially or polynomially in some direction. One can show that for suitable $x_0$ the deterministic term dominates the stochastic integral with positive probability, so that $\|x(t)\|\to\infty$ on a set of positive probability. This violates the existence of a norm-like drift $V$ with $\mathcal{A}V\le 0$ outside a compact set, hence almost sure reachability fails.

For part~(iii), decompose $\mathbb{R}^n$ into the invariant subspace $E_A$ corresponding to eigenvalues with $\operatorname{Re}(\lambda)=0$ and its stable complement associated with $\operatorname{Re}(\lambda)<0$. On the stable part, the argument from (i) gives a quadratic Lyapunov function. On $E_A$, the restricted dynamics are a Brownian motion in dimension $k=\dim(E_A)$.

If $k\le 2$, this process is recurrent. One can construct a weighted norm $\|\cdot\|_\star$ and a logarithmic drift $V(x)=\sqrt{\ln\|x\|_\star}$ whose generator is non-positive outside a compact set, together with a quadratic variant $U(x)=\|x\|^2_\star-b$ that decreases over a time interval with positive probability, yielding \textbf{V1}--\textbf{V2} and hence almost sure reachability.

If $k\ge 3$, the projection of $x(t)$ onto $E_A$ behaves like a $k$-dimensional Brownian motion with bounded drift and full-rank $B$, which is known to be transient and hence contradicts the almost sure reachability. \myqed
\end{pf}

Therefore the almost sure reachability behavior of the continuous-time linear SDE is completely determined by the spectral structure of $A$ and, in the critical case, by the dimension of the subspace $E_A$. This is the continuous-time analogue of the discrete-time characterization for linear systems with additive noise in~\cite{kordabad2025certificates}.

\begin{remark}
Consider the constant-coefficient SDE $\mathrm{d}x(t)=A\mathrm{d}t + B\,\mathrm{d}W(t)$. If $A=0$, the process reduces to a pure diffusion evolving inside the subspace generated by the columns of $B$.  This corresponds to the linear setting of Theorem~\ref{thm:cts-linear-ASR} with $A=0$. Then for $n\leq 2$ the system almost surely reaches any bounded open target set containing zero; when $n\geq 3$ the system violates the almost sure reachability. If $A\neq 0$, the constant drift pushes the state in a fixed direction and grows linearly in time, eventually dominating the stochastic terms.   Consequently, the state escapes to infinity with positive probability and no drift certificate can exist and no bounded target set is almost surely reachable.
\end{remark}

In the next section, we move beyond linear dynamics and develop sum-of-squares based certificates synthesis procedures for polynomial stochastic systems, turning the drift and variant conditions into concrete semidefinite programs that can be solved numerically.

\section{Sum-of-Squares Certificates for Polynomial Systems}\label{sec:sos}
In this section, we focus on polynomial stochastic systems because the inequalities appearing in the drift--variant criteria can be encoded as sum-of-squares (SOS) constraints by an appropriate choice of the templates for the certificates. Then the constraints can be translated directly into semidefinite programs (SDPs) solvable by standard optimization tools. As in the discrete-time setting, SOS enables us to represent properties such as radial unboundedness, non-positive generator outside a compact set, and robust progress toward the target as SOS constraints once proper templates are fixed. Inspired by the double-well Langevin system, we fix a polynomial template for the drift function and choose the variant function as an exponential function composed with a polynomial. This allows us to translate conditions \textbf{V1--V2} into SOS constraints. Hence, it provides a solver-ready and numerically tractable framework for certifying almost sure reachability of continuous-time stochastic systems.

We emphasize that the SOS formulation in this section is a sound but not complete computational relaxation. The necessity result in Section~\ref{sec:cts-certificates} guarantees the existence of drift--variant certificates under almost sure reachability, but it does not imply that these certificates can always be chosen within the fixed polynomial/exponential-polynomial templates used here. Moreover, SOS constraints are themselves sufficient relaxations of polynomial nonnegativity. Therefore, failure of the SOS program to find certificates should not be interpreted as failure of almost sure reachability.

Throughout this section, we consider polynomial stochastic systems so that all relevant expressions in the generator $\mathcal{A}$ and the variance inequalities remain polynomial and can be handled with SOS methods.

\begin{Assumption}[Polynomial SDEs]\label{assum:poly-f-g}
Consider the SDE~\eqref{eq:sde} with $X=\mathbb{R}^n$. 
The functions $f$ and $g$ are assumed to be polynomial in the state. 
Specifically,
\begin{equation*}
f(x)=(f_1(x),\dots,f_n(x))^\top,\qquad f_i\in\mathbb{R}[x]\ \text{for all}\ i,
\end{equation*}
and each entry of $g(x)$ is polynomial, \begin{equation*}
g(x)=\big[g_{ij}(x)\big]_{i,j=1}^{n,m},\qquad g_{ij}\in\mathbb{R}[x]\ \text{for all}\ i,j,
\end{equation*}
where $\mathbb{R}[x]$ is the ring of polynomials in the variable $x$ with real coefficients. Moreover, the target set $G$ is given by polynomial inequalities
\begin{equation*}
G=\{x\in\mathbb{R}^n:g_i(x)<0,\ i=1,\dots,I\},\qquad g_i\in\mathbb{R}[x].
\end{equation*}
\end{Assumption}

\begin{Definition}
For $x \in \mathbb{R}^n$, a multivariate polynomial $p(x)\in \mathbb{R}[x]$ is an SOS, and denoted as $p(x)\in \Sigma[x]$, if there exist some polynomials $p_i(x)\in\mathbb{R}[x]$, $i = 1, \ldots, M$, such that
$
    p(x) = \sum_{i=1}^{M} p_i^2(x)
$.
\end{Definition}
An equivalent characterization of SOS polynomials has been provided by~\cite{parrilo2000structured}. Specifically, a  polynomial $p(x)$ of degree $2\kappa$ is SOS if and only if there exists $Q \succeq 0$ and a vector of monomials $z(x)$ containing all monomials in $x$ of degree $\leq \kappa$ such that
$
    p(x) = z(x)^\top Q z(x). $
Note that all SOS polynomials are nonnegative, but the converse is not necessarily true; that is, there exist nonnegative polynomials that are not SOS (see e.g.,~\cite{ahmadi2012convex}). However, this characterization enables SOS problems to be formulated as SDPs, which can be efficiently solved using convex optimization techniques.

\subsection{SOS-based Drift Function}
We seek a polynomial function $V\in\Sigma[x]$ satisfying~\textbf{V1}. To ensure $V$ is radially unbounded, we require
\begin{equation}\label{eq:ct-unbounded-V}
V(x)-\gamma_0\,x^\top x+\lambda_0\in\Sigma[x],\qquad \gamma_0>0,\ \lambda_0\in\mathbb{R}.
\end{equation}
This guarantees $V(x)\ge\gamma_0\|x\|^2-\lambda_0$, and hence $V(x)\to\infty$ as $\|x\|\to\infty$. 

Next, since $V$ is twice continuously differentiable, we use~\eqref{eq:inf:exp} and impose a non-positive generator condition outside a compact set through
\begin{align}\label{eq:ct-AV-neg}
&-\Big(\underbrace{f(x)^\top\nabla V(x)
+\tfrac12\operatorname{tr}\big(g(x)^\top\nabla^2V(x)g(x)\big)}_{=\mathcal{A}V(x)}\Big)\nonumber\\
&\qquad-\gamma_1\,x^\top x+\lambda_1\in\Sigma[x],
\end{align}
for some $\gamma_1,\lambda_1>0$. This implies $\mathcal{A}V(x)\leq -\gamma_1\,x^\top x+\lambda_1\leq \lambda_1$ for all $x\in\mathbb{R}^n$. Moreover, one can observe $\mathcal{A}V(x)\le0$ for all $x$ such that $x^\top x>{\lambda_1}/{\gamma_1}$. Therefore \textbf{V1} is satisfied with $d=\lambda_1$ and $C=\{x:x^\top x\le{\lambda_1}/{\gamma_1}\}$. 
The constraints~\eqref{eq:ct-unbounded-V}--\eqref{eq:ct-AV-neg} yield an SOS feasibility program whose solution provides a polynomial drift certificate $V$ for the continuous-time system.

\subsection{SOS-based Variant Function}
We now construct the variant certificate $U$ satisfying~\textbf{V2}. Motivated by the double-well example, we adopt a function obtained by composing an exponential function with a polynomial, given by
\begin{equation}\label{eq:U:poly}
U(x):=\frac{1}{\lambda}\big(1-\mathrm{e}^{-\lambda\,\zeta(x)}\big),\qquad \lambda>0,
\end{equation}
where $\zeta\in\mathbb{R}[x]$ is a polynomial to be designed. 
Since $s\mapsto(1-\mathrm{e}^{-\lambda s})/\lambda$ is monotonically increasing, we have $U(x)\le0$ if and only if $\zeta(x)\le0$. Note that the growth rate of 
$U(x)$ in~\eqref{eq:U:poly} decreases for large  $\zeta(x)$. Therefore, this suggests a suitable candidate for \textbf{V2}.

Similar to the example in Section~\ref{sec:exam:cont}, we have
\begin{equation*}
U(x) \le\ H(r):=\frac{1}{\lambda}\Big(1-\mathrm{e}^{-\lambda \zeta_{\max}(r)}\Big),
\quad \forall x\in C_V(r),    
\end{equation*} 
where $C_V(r):=\{x\in \mathbb{R}^n: V(x)\le r\}$ and $\zeta_{\max}(r):=\max_{x\in C_V(r)} \zeta(x)$. Note that $C_V(r)$ is bounded for the radially unbounded polynomial $V$ for each $r<\infty$.

We enforce $\{x:U(x)\le0\}\subseteq G$ by the following set of SOS constraints:
\begin{equation}\label{eq:ct-target-contain}
-g_i(x)+S_i(x)\zeta(x)-\alpha_i\in\Sigma[x],\,\, S_i\in\Sigma[x],\,\, i=1,\dots,I,
\end{equation}
for some $\alpha_i>0$ and all $i=1,\dots,I$. Applying the infinitesimal generator to $U$ yields
\begin{equation*}
\mathcal{A}U(x)=\mathrm{e}^{-\lambda\,\zeta(x)}\,\beta(x),
\end{equation*}
where
\begin{align*}
\beta(x):=&f(x)^\top\nabla\zeta(x)
+\tfrac12\operatorname{tr}\big(G(x)\nabla^2\zeta(x)\big)\nonumber\\ &
-\tfrac12\lambda\,\nabla\zeta(x)^\top G(x)\nabla\zeta(x),
\end{align*}
and $G(x):=g(x)g(x)^\top$. On any region where $\zeta(x)\ge0$, we have $\mathrm{e}^{-\lambda\zeta(x)}\in(0,1]$, hence $\mathcal{A}U(x)\le\beta(x)$. 
A uniform negative mean drift can then be enforced by
\begin{equation}\label{eq:ct-beta-neg}
-\beta(x)-\mu-\Lambda(x)\zeta(x)\in\Sigma[x],\, \Lambda\in\Sigma[x],
\end{equation}
with $\mu>0$. 
This ensures $\beta(x)\le-\mu$ whenever $\zeta(x)\ge0$. Since $U(x)$ is twice continuously differentiable and under Assumption~\ref{assum:poly-f-g}, we have
\begin{equation*}
    \lim_{t\downarrow 0}\frac{\mathbb{E}_x[U(x(t))]-U(x)}{t}
=\mathcal{A}U(x)=\mathrm{e}^{-\lambda\,\zeta(x)}\,\beta(x).
\end{equation*}
Therefore, for any $x$ such that $U(x)\ge0$, there exists a sufficiently small $\tau>0$  such that,
\begin{equation*}
    \frac{\mathbb{E}[U(x(\tau))-U(x)]}{\tau}\le -\frac{\mu}{2}.
\end{equation*}
By Itô’s isometry,
\begin{equation*}
\mathrm{Var}(U(x(\tau))-U(x))\le\mathbb{E}\int_0^\tau\|g(x(s))^\top\nabla U(x(s))\|^2\mathrm{d}s.
\end{equation*}
Since $\nabla U(x)=\mathrm{e}^{-\lambda\zeta(x)}\nabla\zeta(x)$, we have
\begin{equation*}
    \|g(x)^\top\nabla U(x)\|^2
=
\mathrm{e}^{-2\lambda\zeta(x)}\|g(x)^\top\nabla\zeta(x)\|^2.
\end{equation*}
Although this is not a polynomial function of $x$, it is continuous. Hence, it is bounded on every compact sublevel set $\{x:V(x)\le r\}$. Therefore, for every fixed $r<\infty$ and $\tau>0$, the random variable $U(x(\tau))-U(x)$ has finite variance for all initial conditions $x$ satisfying $V(x)\le r$. We denote by $\Gamma_{r,\tau}>0$ an upper bound on this variance, that is, $\operatorname{Var}\!\big(U(x(\tau))-U(x)\big)\le \Gamma_{r,\tau}$. Combining this with the mean inequality $\mathbb{E}[U(x(\tau))-U(x)]\le-\frac{\mu}{2}\tau$ and applying Cantelli’s inequality~\cite{ghosh2002probability} gives
\begin{equation*}
\mathbb{P}\big(U(x(\tau))-U(x)\!\le\!-\frac{\mu}{4}\tau\big)\!\ge\!
\frac{\mu^2\tau^2}{\mu^2\tau^2+16\Gamma_{r,\tau}}\!:=\! \bar \epsilon(r,\tau)\!>\!0,
\end{equation*}
 Hence, the variant condition~\textbf{V2} holds with $h(r)=\tau$, $\delta(r)=\tfrac14\mu\,\tau$, and $\epsilon(r)=\bar \epsilon(r,\tau)$.

The SOS constraints \eqref{eq:ct-target-contain},\eqref{eq:ct-beta-neg} contain bilinear couplings between $(\zeta(x),\lambda)$, $(\zeta(x),\Lambda(x))$ and $(\zeta(x), S_{1,\ldots, I}(x))$. 
In order to solve this, we introduce a slack variable $\varepsilon$ and maximize it subject to the SOS constraints, such that positivity margins are enforced as $\mu\ge\varepsilon$ and $\alpha_i\ge\varepsilon$. Concretely,
\begin{align}\label{eq:ct-variant-opt}
\max_{\substack{
\zeta(x), S_{1,\ldots,I}(x), \Lambda(x),  \\
\lambda, \mu, \alpha_{1,\ldots,I}, \varepsilon
}}\qquad &\varepsilon\\
\text{s.t.}  \qquad\qquad\quad &-\beta(x)-\mu-\Lambda(x)\zeta(x)\in\Sigma[x],\nonumber\\ & \Lambda\in\Sigma[x],\,\,
\lambda\ge\varepsilon,\,\, \mu\ge\varepsilon,\nonumber\\
 &\forall i=1,\dots,I:\nonumber
\\&\,\,g_i(x)+S_i(x)\zeta(x)-\alpha_i\in\Sigma[x],\nonumber\\ &\,\,S_i\in\Sigma[x],\,\,
\alpha_i\ge\varepsilon.\nonumber
\end{align}
The role of $\varepsilon$ is twofold: it acts as a common positive margin for target containment and mean-decrease constraints, and its maximization drives the certificates away from degeneracy; feasibility with $\varepsilon>0$ yields a valid variant with quantitative margins $\alpha_i\ge\varepsilon$ and $\mu\ge\varepsilon$. Using the following standard practice in SOS synthesis, we then resolve these via an \emph{alternating scheme}. We first fix $\zeta(x)$ and optimize multipliers $(\lambda, \Lambda(x), S_{1,\ldots,I}(x))$, margins $(\mu, \alpha_{1,\ldots,I})$ and slack variable $\varepsilon$, then fix multipliers $(\lambda, \Lambda(x), S_{1,\ldots,I}(x))$ and update $\zeta(x)$, margins $(\mu, \alpha_{1,\ldots,I})$ and slack variable $\varepsilon$, repeating until $\varepsilon>0$ is obtained. Hence, each optimization turns into a linear SOS program.

\section{Simulation Results}\label{sec:simulations}
In this section, we present two numerical examples, one linear and one polynomial, to illustrate the implications of our reachability results and the SOS-based certificate synthesis.

\subsection{A Simple Linear System}
\label{subsec:linear}

To demonstrate Theorem~\ref{thm:cts-linear-ASR} without constructing certificates, we consider a simple linear system $\mathrm dx (t) = \sigma \mathrm dW (t)$, i.e., $A=0$ and $B=\sigma I$ in~\eqref{eq:sde}. We fix a small open target set $G:=\{x\in\mathbb R^n:\|x\|<1\}$ containing the origin, initialize the process at a point with $\|x_0\|>1$, and estimate the cumulative distribution function of the hitting time $\mathbb P(\tau_G \le t)$ for $n\in\{1,2,3,4\}$ using $100$ Monte Carlo trajectories for each dimension $n$. Figure~\ref{fig:0} shows a dimension-dependent behavior. In dimensions $n=1$ and $n=2$, the empirical probability of entering $G$ increases rapidly and approaches $1$ as the horizon grows, whereas in dimensions $n=3$ and $n=4$ it saturates strictly below $1$ over the same range of horizons. This matches Theorem~\ref{thm:cts-linear-ASR}.

\begin{figure}[t]
  \centering
  \includegraphics[width=0.47\textwidth]{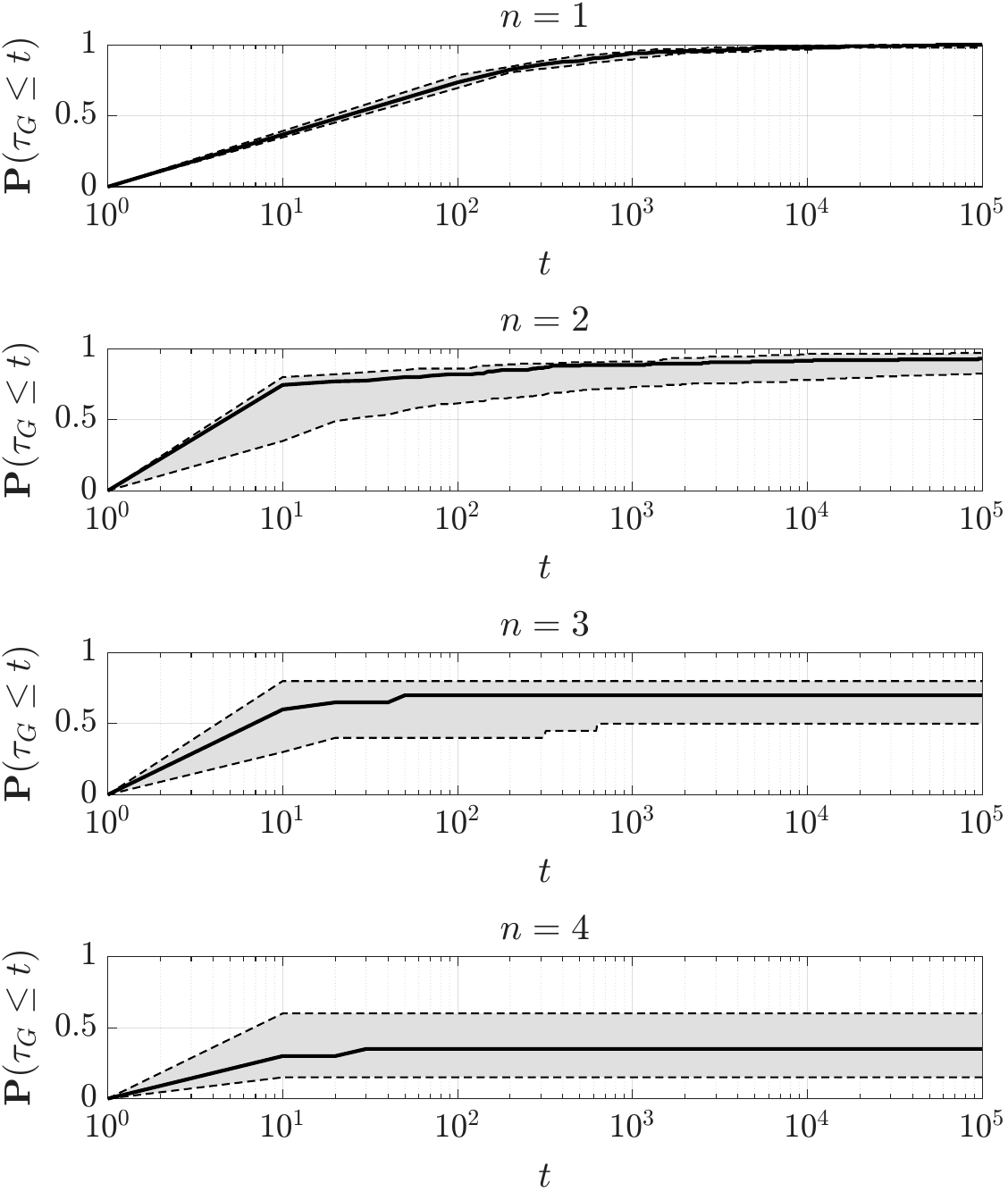}
  \caption{Empirical cumulative distribution function of the hitting time $\mathbb P(\tau_G\le t)$ for $n\in\{1,2,3,4\}$. The curves for $n=1,2$ approach $1$ with increasing horizon, while the curve for $n=3,4$ saturates below $1$. The shaded regions indicate the $10$–$90$\% percentile range across simulations.}
  \label{fig:0}
\end{figure}

\subsection{Wolfe--Quapp Potential}\label{subsec:wq_example}

We consider a two--dimensional stochastic system evolving on the Wolfe--Quapp potential energy surface, which is a standard benchmark for metastability and rare transitions between multiple basins~\cite{zhang2013double,aguilar2010implementation}. The potential energy surface $\mathcal U:\mathbb{R}^2\to\mathbb{R}$ is defined by
\begin{align*}
\mathcal U(x_1(t),x_2(t))
=&x_1(t)^4 + x_2(t)^4
- 2x_1(t)^2- 4x_2(t)^2+\\ & 
 x_1(t)x_2(t)
+ 0.3x_1(t) + 0.1x_2(t).
\end{align*}
The corresponding overdamped Langevin dynamics (a diffusion on this potential) is
\begin{equation}\label{eq:sde:WQ}
\mathrm dx(t)
=
-\nabla \mathcal U(x(t))\,\mathrm dt+\sqrt{2\theta^{-1}}\,\mathrm dW(t),
\end{equation}
where $x(t) := [x_1(t), x_2(t)]^\top$ and $\theta=1.4$ is the inverse temperature, and $W(t)$ is standard
two-dimensional Brownian motion and
\begin{equation*}
\nabla \mathcal U(x_1(t),x_2(t))
=
\begin{bmatrix}
4x_1(t)^3 - 4x_1(t) + x_2(t) + 0.3\\[1ex]
4x_2(t)^3 - 8x_2(t) + x_1(t) + 0.1
\end{bmatrix}.
\end{equation*}
Figure~\ref{fig:1} shows the level
sets of the Wolfe--Quapp potential and overlaid sample paths generated using Euler--Maruyama discretization. The target set
boundary $\partial G$ is shown in black and two trajectories are shown in blue and red. It can be seen that both trajectories enter the target
set. Hence, the noise induces transitions between basins, and trajectories eventually hit $G$. The blue trajectory enters the target set at time $t\approx556.6$.

Figure~\ref{fig:2} shows the synthesized radially
unbounded drift certificate using SOS and  Figure~\ref{fig:25} shows the generator value $\mathcal{A}V(x)$ together with its
zero contour. This plot highlights that $\mathcal{A}V(x)$ becomes non-positive outside a compact set while being uniformly upper bounded, which is
the continuous--time drift condition as in \textbf{V1}.

Finally, Figure~\ref{fig:4} shows the one-step decrease probability $\mathbb{P}(U(x(\tau)) - U(x) \le -\delta)$ for $\delta=0.001$ and $\tau=0.01$ 
over the region $\{x: U(x)>0\}$ for the synthesized variant function that is uniformly positive. This figure confirms condition~\textbf{V2} and together with Figures~\ref{fig:2} and~\ref{fig:25} visualize the two certificates establishing almost sure reachability for the Wolfe--Quapp SDE~\eqref{eq:sde:WQ}.
\begin{figure}[t!]
    \centering
    \includegraphics[width=0.4\textwidth]{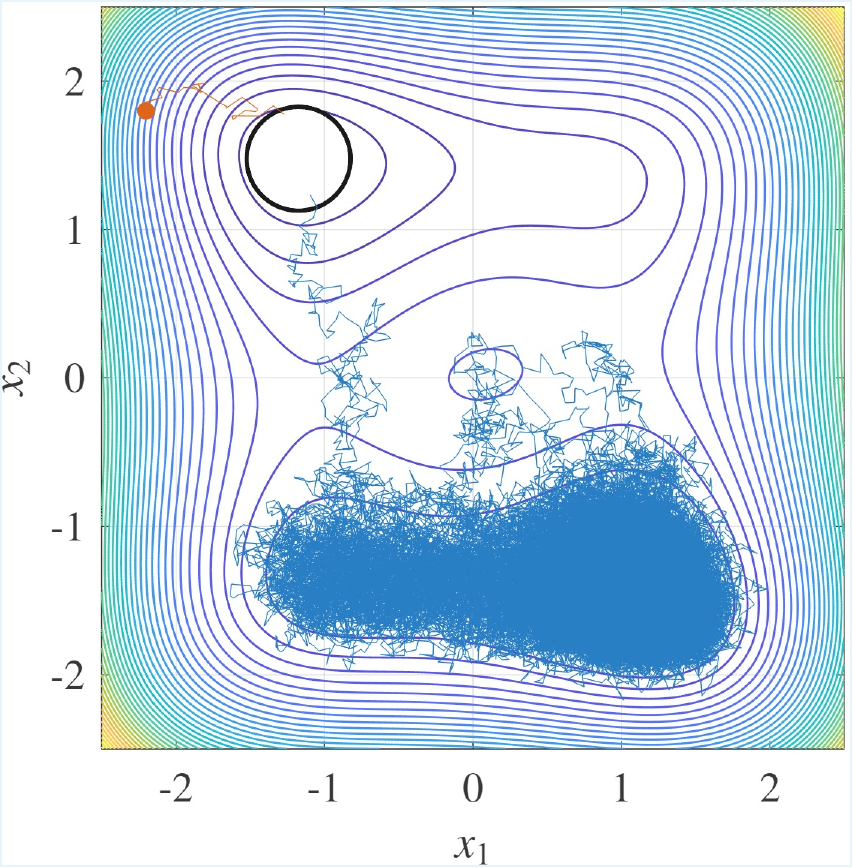}
    \caption{Wolfe--Quapp potential level sets with sample paths of the overdamped Langevin dynamics. The target set $G$ is shown by its boundary $\partial G$ in black, and two trajectories (in blue and red) are simulated until they hit $G$.}
    \label{fig:1}
\end{figure}

\begin{figure}[t!]
    \centering
    \includegraphics[width=0.4\textwidth]{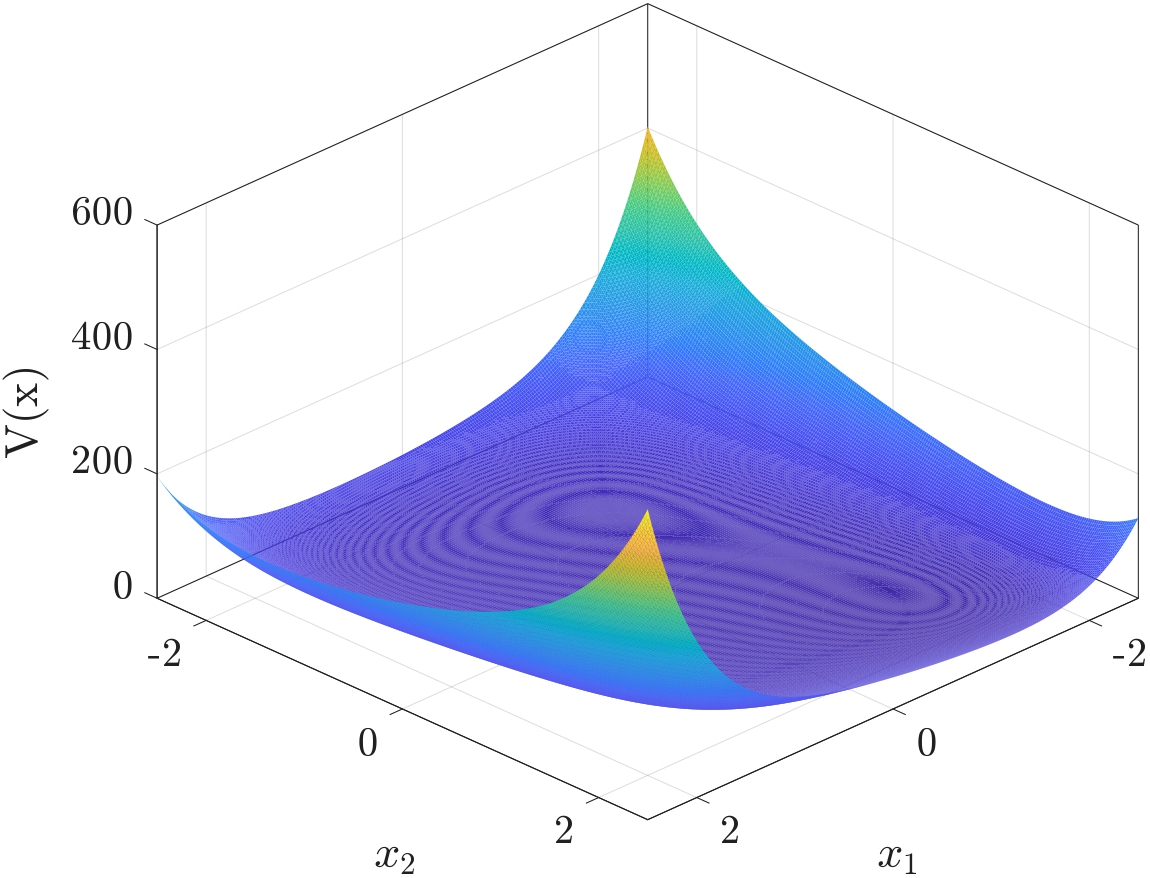}
    \caption{SOS-based Synthesized radially unbounded drift certificate for the Wolfe--Quapp SDE~\eqref{eq:sde:WQ}.}
    \label{fig:2}
\end{figure}

\begin{figure}[t!]
    \centering
    \includegraphics[width=0.4\textwidth]{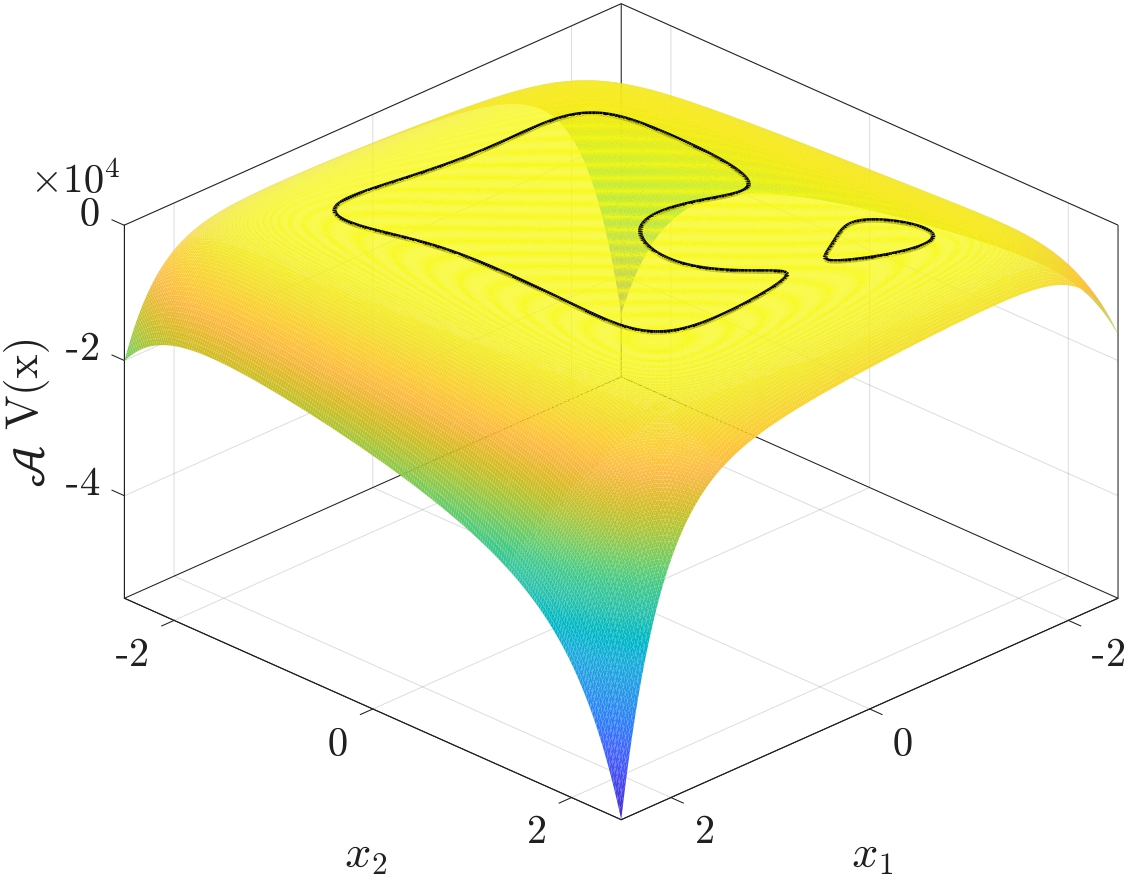}
    \caption{Generator $\mathcal{A}V(x)$ with its zero contour. The non-positivity of $\mathcal{A}V(x)$ outside a compact region visualizes the drift condition that prevents escape to infinity. It can be seen that the generator is also uniformly upper bounded.}
    \label{fig:25}
\end{figure}

\begin{figure}[t!]
    \centering
    \includegraphics[width=0.4\textwidth]{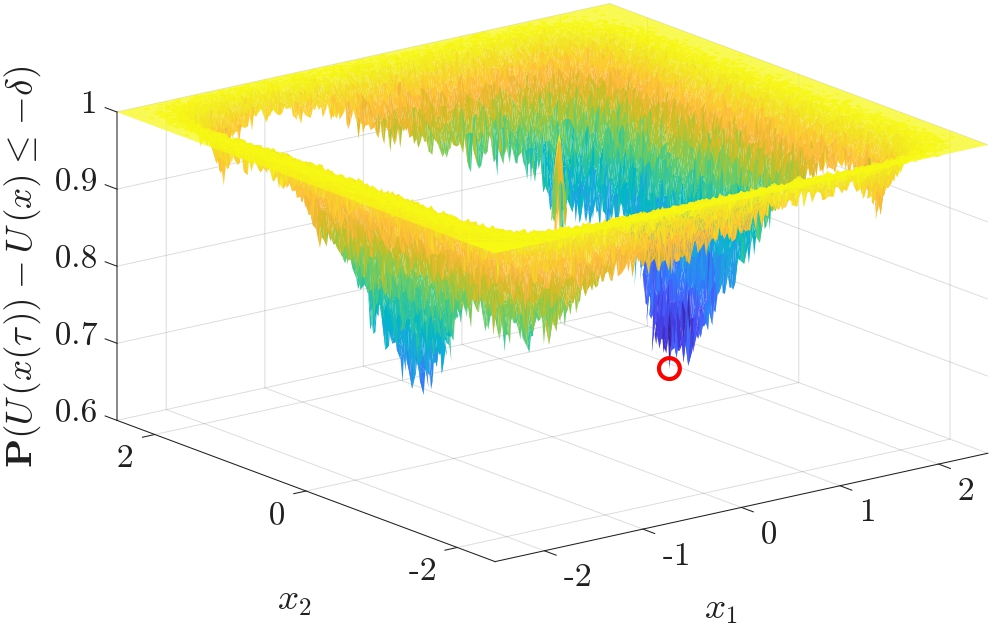}
    \caption{One-step decrease probability 
$\mathbb{P}(U(x(\tau)) - U(x) \le -\delta)$ 
over the region $\{U(x)>0\}$. The red marker indicates the minimum probability over $\{x: U(x)>0\}$. }
    \label{fig:4}
\end{figure}

\section{Conclusion}\label{sec:conclusion}
We developed a certificate-based framework for almost sure reachability in continuous-time stochastic systems. Starting from a double-well Langevin example, we showed that Euler-type discretizations may break almost sure reachability and rule out polynomial drift functions, even when the original SDE is almost sure reachable. This motivates working directly in continuous time. We then introduced drift and variant certificates, and established that together they are sufficient and necessary for almost sure reachability of an open precompact target under mild regularity and weak Feller assumptions.

Specializing to linear SDEs with additive noise, we derived a complete classification of almost sure reachability based on the spectral properties of the drift matrix and the way noise excites the neutral subspace. For polynomial SDEs, we showed how to encode the drift--variant conditions as sum-of-squares constraints, leading to semidefinite programs that synthesize polynomial drift and variant certificates via an alternating optimization scheme.

Future work includes extending the framework to controlled SDEs and certificates for safety specifications and quantitative properties.

 \bibliographystyle{elsarticle-num}
\bibliography{Cont_Arxiv}
\end{document}